\documentclass[a4paper, 10pt, conference]{ieeeconf}      

\IEEEoverridecommandlockouts                              
\overrideIEEEmargins

\usepackage{amsmath,amssymb,amsfonts}  
\usepackage{enumitem,graphicx,xspace,color}
\usepackage{xcolor}

\usepackage{dsfont}
\usepackage{theorem}
\theoremstyle{plain}
\newtheorem{theorem}{Theorem}
\newtheorem{definition}{Definition}
\newtheorem{proposition}{Proposition}
\newtheorem{lemma}{Lemma}
\newtheorem{corollary}{Corollary}

{\theorembodyfont{\upshape}
\newtheorem{example}{Example}
\newtheorem{remark}{Remark}}


\newcommand{\R}{\mathbb{R}}
\newcommand{\N}{\mathbb{N}}
\newcommand{\1}{\mathds{1}}
\newcommand{\0}{\mathbb{0}}

\newcommand{\diag}[1]{\text{diag}\left\{#1\right\}}
\newcommand{\sign}[1]{\text{sign}\left(#1\right)}
\newcommand{\range}[1]{\mathcal{R}(#1)}
\newcommand{\card}[1]{ \text{card}(#1) }
\renewcommand{\ker}[1]{\mathcal{N}(#1)}
\newcommand{\abs}[1]{ \left\lvert #1 \right\rvert }
\newcommand{\norm}[1]{ \| #1 \|}
\newcommand{\spectrum}[1]{\mathrm{sp}(#1)}
\newcommand{\real}[1]{\mathrm{Re}[#1]}
\newcommand{\imag}[1]{\mathrm{Im}[#1]}
\newcommand{\G}{\mathcal{G}}
\newcommand{\edgeSet}{\mathcal{E}}
\newcommand{\nodeSet}{\mathcal{V}}

\newcommand{\Ldagger}{L^\dagger}
\newcommand{\half}{\frac{1}{2}}

\newcommand{\corank}[1]{\text{corank}(#1)}
\newcommand{\rank}[1]{\text{rank}(#1)}
\newcommand{\vspan}[1]{\text{span}(#1)}

\newcommand{\epos}{\overset{\vee}{>}}
\newcommand{\PF}{\mathcal{PF}\xspace}

\newcommand{\Rtot}{R_\mathrm{tot}}

\title{\LARGE \bf On the properties of Laplacian pseudoinverses}

\author{Angela Fontan and Claudio Altafini
\thanks{Work supported in part by a grant from the Swedish Research Council (grant n. 2020-03701) and from the Swedish ELLIIT program.}
\thanks{A. Fontan and C. Altafini are with the Division of Automatic Control,
	Department of Electrical Engineering, Link\"{o}ping University, SE-58183 Link\"{o}ping,	Sweden, E-mail:  $\{$angela.fontan, claudio.altafini$\}$@liu.se.}}

\makeatletter
\def\endthebibliography{%
	\def\@noitemerr{\@latex@warning{Empty `thebibliography' environment}}%
	\endlist
}
\makeatother

\begin{document}
\bstctlcite{IEEEexample:BSTcontrol}
\maketitle
\thispagestyle{empty}
\pagestyle{empty}

\begin{abstract}
The pseudoinverse of a graph Laplacian is used in many applications and fields, such as for instance in the computation of the effective resistance in electrical networks, in the calculation of the hitting/commuting times for a Markov chain and in continuous-time distributed averaging problems. In this paper we show that the Laplacian pseudoinverse is in general not a Laplacian matrix but rather a signed Laplacian with the property of being an eventually exponentially positive matrix, i.e., of obeying a strong Perron-Frobenius property. 
We show further that the set of signed Laplacians with this structure (i.e., eventual exponential positivity) is closed with respect to matrix pseudoinversion. 
This is true even for signed digraphs, and provided that we restrict to Laplacians that are weight balanced also stability is guaranteed.

\end{abstract}

\section{INTRODUCTION}

For a network or a networked system, the Laplacian matrix is a fundamental object that captures information about e.g. connectivity and spectrum \cite{Chung1997,Agaev2005Spectra}, as well as properties of the dynamics that live on the graph \cite{OlfatiSaberMurray2004,AltafiniLini2015,BronskiDeville2014,PanShaoMesbahi2016}.
Associated to the Laplacian is also a Laplacian pseudoinverse, typically a Moore-Penrose pseudoinverse, which has also been used extensively to describe graph-related quantities. 
For instance it is used to build an effective resistance matrix for the graph, a distance measure that exploits the analogy between graphs and electrical networks \cite{KleinRandic1993,XiaoGutman2003,GhoshBoydSaberi2008}, and to compute hitting/commuting times in Markov chains \cite{Chandra1996,Palacios2001,Boley2011,VanMieghem2017}.
It is also used to estimate the $ \mathcal{H}_2 $ norm in networked dynamical systems \cite{YoungScardoviLeonard2010,YoungScardoviLeonard2011,lindmark2020investigating}.

If we consider a graph with nonnegative edge weights, it is well-known that the Laplacian $L$ is an M-matrix (i.e., a matrix with nonpositive off-diagonal entries, such that $ -L$ is marginally stable, 
see below for proper definitions). 
It is also easy to show that the Laplacian pseudoinverse does not belong to the same class, not even when the graph is undirected. 
Consider for instance the following Laplacian matrix 
\[
L = \begin{bmatrix}
    0.8  & -0.7 &  -0.1 \\
   -0.7 &   0.9  & -0.2 \\
   -0.1 &  -0.2 &   0.3
   \end{bmatrix}.
 \]
 Its pseudoinverse is 
\[
 L^\dagger = \begin{bmatrix}    0.773  &  
 0.048 &  -0.821 \\
 0.048  &  0.628 &  -0.676 \\
 -0.821 &  -0.676  &  1.498
 \end{bmatrix} 
\]
which has an anomalous sign in the (1,2) entry, even though it has the same stability properties of $ L$. 

The aim of this paper is to investigate the algebraic properties of Laplacian pseudoinverses. 
Even though $ L^\dagger $ is not an M-matrix, it has nevertheless most of the properties of M-matrices, most notably it obeys to a strong Perron-Frobenius property: the pair formed by the eigenvalue $0$ and eigenvector $ \1 = \begin{bmatrix} 1 & \ldots & 1 \end{bmatrix}^T $ is the ``dominant'' eigenpair for $ - L^\dagger$ (just like it is for $ -L $) in spite of the presence of positive off-diagonal entries in $ L^\dagger$.
Such matrices are called {\em eventually exponentially positive} in the linear algebra literature \cite{Noutsos2006,NoutsosTsatsomeros2008,JohnsonTarazaga2004,AltafiniLini2015}.
We show in the paper that this argument can be extended to signed Laplacians, i.e., Laplacians associated to signed digraphs\footnote{As signed Laplacian here we use the so-called ``repelling Laplacian'' in the terminology of \cite{Shi2019Dynamics}, see Section~\ref{sec:preliminaries} for a precise definition.}: the pseudoinverse of an eventually exponentially positive signed Laplacian is an eventually exponentially positive signed Laplacian. 
In other words, the class of eventually exponentially positive signed Laplacians is closed with respect to pseudoinversion. 

Under the assumption of edge weight balance, such class is also closed with respect to stability, i.e., $ -L $ and $ -L^\dagger $ eventually exponentially positive are also marginally stable (and of corank $1$).

When we restrict further the class from weight balanced $L$ to normal $L$, then it coincides also with the class of Laplacians and Laplacian pseudoinverses whose symmetric part is positive semidefinite of corank 1. 
Such restriction is particularly useful in the context of effective resistance, which, being a distance, has to be symmetric. 
For normal signed Laplacians we obtain a natural way to extend the notion of effective resistance to digraphs, alternative to the definitions already appeared in the literature, see e.g. \cite{YoungScardoviLeonard2016a}.

\section{PRELIMINARIES}
\label{sec:preliminaries}

\subsection{Linear algebraic preliminaries}
Given a matrix $A= [a_{ij}] \in \R^{n\times n}$, $A\ge 0$ means element-wise nonnegative, i.e., $a_{ij}\ge 0$ for all $i,j=1,\dots,n$, while $A>0$ means element-wise positive, i.e., $a_{ij}>0$ for all $i,j=1,\dots,n$.
The spectrum of $A$ is denoted $\spectrum{A}= \{\lambda_1(A),\dots,\lambda_n(A)\}$, where $\lambda_i(A)$, $i=1,\dots,n$, are the eigenvalues of $A$.
In this paper we use the ordering $\real{\lambda_1(A)}\le \real{\lambda_2(A)} \le \dots\le \real{\lambda_n(A)}$, where $\real{\lambda_i(A)}$ indicates the real part of $\lambda_i(A)$.
The spectral radius of $A$ is the smallest real nonnegative number such that $\rho(A)\ge \abs{\lambda_i(A)}$ for all $i=1,\dots,n$ and $\lambda_i(A)\in \spectrum{A}$.
A matrix $A$ is called \textit{Hurwitz stable} if $\real{\lambda_n(A)}<0$, and \textit{marginally stable} if $\real{\lambda_n(A)}=0$ is a simple root of the minimal polynomial of $A$.

A matrix $A$ is called positive semidefinite (psd) if $x^T A x = x^T \frac{A+A^T}{2} x\ge 0$ $\forall x\in \R^n$ and it is called positive definite (pd) if $x^T A x = x^T \frac{A+A^T}{2} x > 0$ $\forall x\in \R^n\setminus \{0\}$.

A matrix $A$ is called irreducible if there does not exist a permutation matrix $P$ s.t. $P^T A P$ is block triangular.

A matrix $B$ is called a Z-matrix if it can be written as $B=s I -A$, where $A\ge 0$ and $s>0$, and it is called a M-matrix if, in addition, $s\ge \rho(A)$, which implies that all the eigenvalues of $B$ have nonnegative real part.
If $s>\rho(A)$ then $B$ is nonsingular and $-B$ is Hurwitz stable.
If $s = \rho(A)$ then $B$ is singular, and if $A$ is irreducible then $-B$ is marginally stable.

If $A$ is a singular matrix, the Moore-Penrose pseudoinverse of A, denoted $A^\dagger$, is the unique $n\times n$ matrix that satisfies $A A^\dagger A =A$, $A^\dagger A A^\dagger =A^\dagger$, $(A^\dagger A)^T = A^\dagger A$, and $(A A^\dagger)^T = A A^\dagger$.
A singular matrix $ A $ is said to have index 1 if the range of $ A $, $ \range{A} $, and the
kernel of $ A $, $\ker{A}$, are complementary subspaces, i.e., $\range{A}\cap \ker{A} = \emptyset$. For index 1 singular matrices, the Drazin inverse and the group inverse coincide. 
A singular M-matrix has always index 1 \cite{Meyer2000}.
A matrix is {\em normal} if it commutes with its transpose: $ A A^T = A^T A $.
A matrix $ A $ is said an \textit{EP matrix} (Equal Projector, also called a \textit{range
symmetric matrix} \cite{Meyer2000}) if $\ker{A}=\ker{A^T}$ (and hence $\range{A}=\range{A^T}$). 
EP matrices generalize normal matrices, and like normal matrices have many equivalent characterizations, see \cite{Meyer2000}. 
For instance an EP matrix $ A $ is such that $ A $ commutes with its Moore-Penrose pseudoinverse $ A^\dagger$. 
If $ A $ is an EP-matrix, then $\exists\,U$ orthogonal such that
\begin{equation*}
A= U \begin{bmatrix} 0&0\\0&B\end{bmatrix} U^T
\end{equation*}
with $ B $ nonsingular of dimension $  r = \rank{A} $. Singular EP matrices have index 1, and for them the Moore-Penrose pseudoinverse, the Drazin inverse and the group inverse coincide.

\subsection{Signed graphs}
Let $\G(A) = (\nodeSet, \edgeSet, A)$ be the (weighted) digraph with vertex set $\nodeSet$ ($\card{\nodeSet}=n$), $\edgeSet =\nodeSet \times \nodeSet$ and adjacency matrix $A= [a_{ij}] \in \R^{n\times n}$: $a_{ij}\in \R\setminus\{0\}$ iff $(j,i)\in \edgeSet$. 
Since each edge of the digraph is labeled by a sign (i.e., $\sign{a_{ij}}=\pm 1$), $\G(A)$ is called a signed digraph. In the particular case where $A\ge 0$, the digraph $\G(A)$ is called nonnegative.
For digraphs $\G(A)$ which are strongly connected and without self-loops, the matrix $A$ is irreducible with null-diagonal.

The weighted in-degree and out-degree of node $i$ are denoted $\sigma_i^{\mathrm{in}} = \sum_{j=1}^n a_{ij}$ and $\sigma_i^{\mathrm{out}} = \sum_{j=1}^n a_{ji}$, respectively.
The (signed) \textit{Laplacian} of a graph $\G(A)$ is the (in general non-symmetric) matrix $L= \Sigma -A$ where $\Sigma = \diag{\sigma_1^{\mathrm{in}},\dots,\sigma_n^{\mathrm{in}}}$.
This definition of signed Laplacian corresponds to the so-called ``repelling Laplacian'' in the terminology of \cite{Shi2019Dynamics}.
By construction, this Laplacian is a singular matrix with $\ker{L}= \vspan{\1}$. 
However, $-L$ need not be marginally stable and its symmetric part $L_s=\frac{L+L^T}{2}$ need not be positive semidefinite, as we show in the examples in Section~\ref{sec:results}.
Moreover, $L$ irreducible (or, $\G(A)$ strongly connected) need not imply $\corank{L}=1$. For instance, consider a complete, undirected, signed graph $\G(A)$ whose Laplacian is 
\begin{equation*}
L= \begin{bmatrix} 
3&-1&-1&-1\\
-1&1&1&-1\\
-1&1&1&-1\\
-1&-1&-1&3
\end{bmatrix}.
\end{equation*}
It is $\spectrum{L}=\{0,0,4,4\}$ and $\1, [0,1,-1,0]^T\in \ker{L}$, i.e., $L$ is marginally stable of corank $2$.
Similarly, $\corank{L}=1$ need not imply $L$ irreducible.

A digraph $\G(A)$ is \textit{weight balanced} if in-degree and out-degree coincide for each node, i.e., $\sigma_i^{\mathrm{in}} = \sum_{j=1}^n a_{ij} = \sum_{j=1}^n a_{ji} = \sigma_i^{\mathrm{out}}$ for all $i=1,\dots,n$. 
As we show in Lemma~\ref{lemma:irreducible}, $\corank{L}=1$ and $L$ weight balanced imply $L$ irreducible.

\subsection{Eventual exponential positivity}
\begin{definition}
	A matrix $A \in \R^{n\times n}$ has the (strong) Perron-Frobenius property if $\rho(A)$ is a simple positive eigenvalue of $A$ s.t. $\rho(A) > \abs{\lambda(A)}$ for every $ \lambda(A) \in \spectrum{A}$, $ \lambda(A)\ne \rho(A)$, and $\chi$, the right eigenvector relative to $\rho(A)$, is positive.
\end{definition}
The set of matrices which possess the Perron-Frobenius property will be denoted $\PF$, and it is known (see e.g. \cite[Thm 8.4.4]{HornJohnson2013}) that irreducible nonnegative matrices are part of this set.
However, it has been shown (see \cite{Noutsos2006}) that matrices having negative elements can also possess this property, provided that they are eventually positive.
\begin{definition}
	A matrix $A \in \R^{n\times n}$ is called \textit{eventually positive} (denoted $A\epos 0$) if $\exists k_0 \in \N$ s.t. $A^k>0$ for all $k\ge k_0$.
\end{definition}
\begin{theorem}\cite[Thm 2.2]{Noutsos2006}\label{thm:evenpos}
	Let $A\in \R^{n\times n}$. Then the following statements are equivalent:
	\begin{enumerate}
		\item Both $A, A^T \in \PF$;
		\item $A\epos 0$;
		\item $A^T\epos 0$.	 
	\end{enumerate}
\end{theorem}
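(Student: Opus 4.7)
The plan is to prove the three equivalences by first observing a trivial symmetry, then closing a cycle of implications $(2) \Rightarrow (1) \Rightarrow (2)$.

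The equivalence $(2) \Leftrightarrow (3)$ is essentially free: since $(A^T)^k = (A^k)^T$ and a real matrix is entrywise positive iff its transpose is, $A \epos 0$ is literally the same statement as $A^T \epos 0$. So the only real work is the equivalence $(1) \Leftrightarrow (2)$.

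For $(2) \Rightarrow (1)$, I would fix some $k \geq k_0$ and apply the classical Perron-Frobenius theorem to the positive matrix $A^k$. This gives that $\rho(A^k)$ is a simple positive eigenvalue, strictly dominant in modulus, with a positive eigenvector $\chi$. The task is then to lift these properties back to $A$. Using the spectral mapping theorem, the eigenvalues of $A^k$ are the $k$-th powers of the eigenvalues of $A$; hence $\rho(A^k) = \rho(A)^k$ and $\rho(A) > 0$. The delicate point is to argue that $\rho(A)$ itself is simple and strictly dominant for $A$: simplicity of $\rho(A^k)$ rules out higher algebraic multiplicity and also forces $\rho(A)$ to be the unique eigenvalue of $A$ with $|\lambda|=\rho(A)$, since any other such $\lambda$ would satisfy $(\lambda/\rho(A))^k = 1$ and contribute a second eigenvalue of $A^k$ equal to $\rho(A^k)$. (If needed, one can invoke the hypothesis for both $k$ and $k+1$, which are coprime, to eliminate all roots of unity at once.) The positivity of the Perron vector of $A^k$ then yields a positive right eigenvector of $A$ at $\rho(A)$, giving $A\in\PF$. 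Applying the same reasoning to $A^T$, which also satisfies $(A^T)^k > 0$, yields $A^T \in \PF$.

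For $(1) \Rightarrow (2)$, I would use the standard dominant-eigenvalue spectral decomposition. Since $\rho(A)$ is simple and strictly dominant, letting $\chi > 0$ be the right Perron eigenvector of $A$ and $\psi > 0$ the right Perron eigenvector of $A^T$ (i.e., the left Perron eigenvector of $A$), we can write
\[
\frac{1}{\rho(A)^k} A^k \;=\; \frac{\chi \psi^T}{\psi^T \chi} \;+\; R_k,
\]
where $R_k \to 0$ as $k \to \infty$ because the remaining spectrum of $A/\rho(A)$ lies strictly inside the unit disk. The rank-one matrix $\chi \psi^T / (\psi^T \chi)$ is strictly entrywise positive, so for all sufficiently large $k$ the perturbation $R_k$ cannot destroy positivity, and $A^k > 0$ follows.

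The main obstacle I anticipate is the bookkeeping in $(2) \Rightarrow (1)$ when translating simplicity and strict dominance from $A^k$ back to $A$, specifically ruling out peripheral eigenvalues $\lambda \neq \rho(A)$ with $|\lambda| = \rho(A)$ that could be masked by taking a single $k$-th power; exploiting that $A^k > 0$ holds for all large $k$ (not just one) is what makes this step clean.
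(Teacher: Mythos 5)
Your proposal is correct. Note that the paper does not prove this statement at all: it is quoted verbatim from Noutsos (Theorem 2.2 of the cited reference), so there is no internal proof to compare against; your argument supplies the standard self-contained one, essentially the proof found in the literature. Both directions are sound: for $(1)\Rightarrow(2)$ the spectral-projection decomposition $A^k/\rho(A)^k = \chi\psi^T/(\psi^T\chi) + R_k$ with $R_k\to 0$ is exactly the right tool, using that $\psi^T\chi>0$ and that the rank-one projector is entrywise positive. For $(2)\Rightarrow(1)$ you correctly identified the only delicate point, namely that a single power $A^k>0$ cannot by itself exclude a peripheral eigenvalue of the form $\rho(A)\zeta$ with $\zeta\neq 1$ a $k$-th root of unity (e.g.\ $-\rho(A)$ when $k$ is even); your parenthetical remark should really be promoted to a necessary step, since simplicity of $\rho(A^k)$ alone only shows there is a unique peripheral eigenvalue $\lambda$ of $A$ with $\lambda^k=\rho(A)^k$, not that $\lambda=\rho(A)$. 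Using $A^k>0$ and $A^{k+1}>0$ simultaneously (available because eventual positivity holds for all large exponents) and taking the quotient $\lambda = \lambda^{k+1}/\lambda^k = \rho(A)$ closes this gap, and the one-dimensionality of the Perron eigenspace of $A^k$ then transfers the positive eigenvector to $A$. With that step made explicit, the proof is complete.
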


\begin{definition}
	A matrix $A \in \R^{n\times n}$ is called \textit{eventually exponentially positive} if $\; \exists \, t_0 \in \N$ s.t. $e^{At}>0$ for all $t\ge t_0$.
\end{definition}
\begin{lemma}\cite[Thm 3.3]{NoutsosTsatsomeros2008}\label{lemma:evenexppos}
	A matrix $A\in \R^{n\times n}$ is eventually exponentially positive if and only if $\exists \,d\ge 0$ s.t. $A+d I \epos 0$.
\end{lemma}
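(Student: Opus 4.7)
The key identity is the exponential shift: for any $d\ge 0$,
\[
e^{At}=e^{-dt}\,e^{(A+dI)t},
\]
whose scalar factor $e^{-dt}>0$ preserves sign patterns. Setting $B:=A+dI$, the lemma reduces to linking the eventual positivity of the matrix $B$ (meaning $B^k>0$ for $k$ large) with the eventual positivity of its matrix exponential $e^{Bt}$.

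For the \emph{if} direction, suppose $B\epos 0$. By Theorem~\ref{thm:evenpos}, both $B$ and $B^T$ lie in $\PF$, so $\rho(B)$ is simple, strictly modulus-dominant, and carries positive right and left Perron eigenvectors $\chi,\chi^\ast>0$. The spectral decomposition of $e^{Bt}$ then has leading term $e^{\rho(B)t}\,\chi\chi^{\ast T}/(\chi^{\ast T}\chi)>0$, while every other contribution is bounded by a polynomial times $e^{\real{\lambda}t}$ with $\real{\lambda}<\rho(B)$. Hence $e^{Bt}>0$ for all $t$ large enough, and the shift identity gives $e^{At}>0$ eventually.

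For the \emph{only if} direction, assume $e^{At}>0$ for $t\ge t_0$. A Jordan-form analysis of $e^{At}$ shows that the eigenvalue $\lambda_n(A)$ of largest real part must be real and simple, with strictly positive right and left eigenvectors, since otherwise the dominant term of the asymptotic expansion of $e^{At}$ would either oscillate in sign or fail to be entrywise positive. This yields $A,A^T\in\PF$ as far as the dominant eigenpair is concerned and gives $\real{\lambda_n(A)}>\real{\lambda_i(A)}$ for $i<n$, although the modulus-dominance required by the definition of $\PF$ may still fail. Choosing $d\ge 0$ large enough to enforce $\lambda_n(A)+d>|\lambda_i(A)+d|$ for every $i<n$---always feasible since $(\lambda_n(A)+d)^2>(\real{\lambda_i(A)}+d)^2+\imag{\lambda_i(A)}^2$ holds for $d$ sufficiently large---brings $A+dI$ and its transpose fully into $\PF$. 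A second application of Theorem~\ref{thm:evenpos} then concludes $A+dI\epos 0$.

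\textbf{Main obstacle.} The delicate step is the extraction of the Perron--Frobenius structure of $A$ from the eventual positivity of $e^{At}$: one has to argue, through the spectral/Jordan decomposition, that nonreal or non-simple peripheral eigenvalues of $A$ are incompatible with $e^{At}$ being eventually strictly positive, and that the associated dominant right and left eigenvectors are necessarily positive. Once this structural information is in hand, the shift-by-$d$ step is a routine magnitude estimate.
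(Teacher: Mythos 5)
The paper itself offers no proof of this lemma: it is imported verbatim from \cite[Thm 3.3]{NoutsosTsatsomeros2008}, so your attempt can only be compared with the argument in that reference, which it essentially follows. Your ``if'' direction is complete and correct: the shift identity $e^{At}=e^{-dt}e^{(A+dI)t}$, Theorem~\ref{thm:evenpos} applied to $B=A+dI$ to get positive right and left Perron vectors $\chi,\psi>0$, and the observation that the rank-one term $e^{\rho(B)t}\,\chi\psi^{T}/(\psi^{T}\chi)>0$ dominates all other contributions, which are $O\bigl(t^{m}e^{\real{\lambda}t}\bigr)$ with $\real{\lambda}\le\abs{\lambda}<\rho(B)$, is exactly the right estimate. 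The final shift computation in the converse, $(\lambda_n(A)+d)^2>(\real{\lambda_i(A)}+d)^2+\imag{\lambda_i(A)}^2$ for $d$ large (which also makes $\lambda_n(A)+d=\rho(A+dI)>0$), is also correct.

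The gap is the step you yourself flag: in the ``only if'' direction, the claim that a nonreal or non-simple rightmost eigenvalue, or a non-positive dominant eigenvector, is ``incompatible with $e^{At}$ being eventually strictly positive'' is the entire content of that implication, and as written it is an assertion rather than an argument; in particular, with a conjugate pair on the rightmost vertical line the leading term of the expansion oscillates, but you must still produce a nonpositive entry at arbitrarily large times, which the sketch does not do. A clean way to close it avoids asymptotics altogether: for each fixed $t\ge t_0$ the matrix $e^{At}$ is strictly positive, so Perron's theorem applies to it directly. Its eigenvalues are $e^{\lambda_i(A)t}$, with the algebraic multiplicity of a value equal to the sum of the multiplicities of all eigenvalues of $A$ mapped to it, and $\rho(e^{At})=e^{t\max_i\real{\lambda_i(A)}}$. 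If two distinct eigenvalues of $A$ shared the maximal real part, then for suitable (indeed most) $t\ge t_0$ they would yield two distinct eigenvalues of $e^{At}$ of maximal modulus, contradicting the uniqueness of the Perron root; a rightmost eigenvalue of multiplicity at least two would contradict its simplicity. Hence $\lambda_n(A)$ is real, simple and strictly dominant in real part, and since its eigenvector satisfies $e^{At}v=e^{\lambda_n(A)t}v$ and the corresponding eigenspace of $e^{At}$ is one-dimensional, $v$ is a multiple of the Perron vector of $e^{At}$ and can be taken strictly positive; applying the same reasoning to $(e^{At})^T=e^{A^Tt}>0$ gives the positive left eigenvector. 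With this inserted, your shift-by-$d$ conclusion via Theorem~\ref{thm:evenpos} goes through as you wrote it.
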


\subsection{Kron reduction for undirected networks}
\label{sec:Kron_reduction}
Consider an undirected, connected, weighted graph $\G(A)=(\nodeSet,\edgeSet,A)$ with adjacency matrix $A= [a_{ij}] \in \R^{n\times n}$.
Let $\alpha \subset \{1,\dots,n\}$ (with $\card{\alpha}\ge 2$) and $\beta = \{1,\dots,n\} \setminus \alpha$ be a partition of the node set $\nodeSet=\{1,\dots,n\}$. 
After an adequate permutation of its rows and columns, the Laplacian $L$ of the graph $\G(A)$ can be rewritten as $L = \begin{bmatrix} L[\alpha] & L[\alpha,\beta] \\L[\beta,\alpha] & L[\beta]\end{bmatrix}$, where we denote $L[\alpha,\beta]$ the submatrix of $L$ determined by the index sets $\alpha$ and $\beta$, and $L[\alpha]:= L[\alpha, \alpha]$ the principal submatrix of $L$ determined by the index set $\alpha$.

If $L[\beta]$ is nonsingular, the Schur complement of $L[\beta]$ in $L$ is given by $L/L[\beta] := L[\alpha]- L[\alpha,\beta] L[\beta]^{-1} L[\beta,\alpha]$.
In the context of electrical networks, where $\alpha$ and $\beta$ are referred to as boundary (or terminal) and interior nodes, this procedure is denoted Kron reduction (see e.g. \cite{DorflerBullo2013,DorflerPorcoBullo2018}) and it yields a matrix $L_r:=L/L[\beta] $, denoted Kron-reduced matrix, which is still a Laplacian of a weighted, undirected graph $\G_r$ (see \cite{DorflerBullo2013} for details and properties of $L_r$).

If $\G(A)$ is signed, when $\alpha$ is chosen as the set of nodes incident to edges with negative weight it is shown in \cite{Chen2016a} that $L[\beta]$ is positive definite and that $L$ is psd of corank $1$ if and only if $L_r$ is psd of corank $1$. 

\section{PSEUDOINVERSE OF EVENTUALLY EXPONENTIALLY POSITIVE LAPLACIANS}
\label{sec:results}
In this section we study the connection between the marginal stability and eventual positivity of the Laplacian $L$ and of its pseudoinverse $\Ldagger$.
If the network is undirected and signed, or if the network is directed, signed and weight balanced, we show that $-L$ is eventually exponentially positive if and only if $\Ldagger$ is eventually exponentially positive.

\subsection{Directed signed network case}
Assume that the graph $\G(A) = (\nodeSet,\edgeSet,A) $ is directed and without loops, which means that the adjacency matrix $A$ is with null diagonal.

When the graph is weight balanced, the Laplacian is a EP-matrix since $\ker{L}=\ker{L^T}=\vspan{\1}$. In this case, it is shown in \cite{Altafini2019Investigating} that $-L$ is eventually exponentially positive if and only if $-L$ is marginally stable (of corank $1$). 
In addition, if the Laplacian is a normal matrix, then eventual exponential positivity of $L$ is equivalent to that of its symmetric part.
\begin{theorem}
\label{thm:Altafini2019}
	Consider a signed digraph $\G(A)$ such that the corresponding Laplacian $L$ is weight balanced. Then, the following conditions are equivalent:
	\begin{enumerate}[label=(\roman*)]
	    \item $-L$ is eventually exponentially positive;
        \item $-L$ is marginally stable of corank $1$.
	\end{enumerate}
	Furthermore, if $L$ is normal then (i) and (ii) are equivalent to
	\begin{enumerate}[resume,label=(\roman*)]
	    \item $L_s = \frac{L+L^T}{2}$ is psd of corank $1$.
	\end{enumerate}
\end{theorem}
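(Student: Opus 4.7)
The plan is to argue the three equivalences by unpacking the Perron–Frobenius characterization (Theorem~\ref{thm:evenpos}) and the spectral shift from Lemma~\ref{lemma:evenexppos}, exploiting that weight balance forces both $L\1=0$ and $\1^T L=0$, so $\1>0$ is a common kernel vector of $L$ and $L^T$.

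For $(i)\Leftrightarrow(ii)$ I would use the shift $M_d := dI-L$. For $(ii)\Rightarrow(i)$, assume $-L$ is marginally stable of corank~$1$, so $\lambda_1(L)=0$ is simple and $\real{\lambda_i(L)}>0$ for $i\ge 2$. Then for every nonzero eigenvalue $\mu=a+ib$ of $L$, the inequality $|d-\mu|^2=(d-a)^2+b^2<d^2$ holds as soon as $d> (a^2+b^2)/(2a)$. Choosing $d$ larger than all such thresholds, $M_d$ has $d$ as a simple eigenvalue of maximum modulus with positive eigenvector $\1$; the same holds for $M_d^T$ (by weight balance), so $M_d,M_d^T\in\PF$, and Theorem~\ref{thm:evenpos} together with Lemma~\ref{lemma:evenexppos} yields~(i). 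Conversely, if~(i) holds, Lemma~\ref{lemma:evenexppos} gives $M_d \epos 0$ for some $d\ge 0$; Theorem~\ref{thm:evenpos} says $M_d\in\PF$, so $\rho(M_d)$ is simple with a positive eigenvector. Since $M_d\1=d\1$ with $\1>0$, and Perron–Frobenius precludes a second linearly independent nonnegative eigenvector, one must have $d=\rho(M_d)$, simple; moreover $d=0$ is impossible because then $M_d^k\1=0$ contradicts $M_d^k>0$. Hence $0$ is a simple eigenvalue of $L$, and reversing the inequality $|d-\mu|<d$ for every other eigenvalue $\mu$ of $L$ gives $\real{\mu}>0$, which is~(ii).

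For the normal case, the key fact is that a real normal matrix admits a real orthogonal block-diagonalization whose $1\times1$ blocks are the real eigenvalues and whose $2\times2$ blocks encode the conjugate pairs $a\pm ib$; symmetrizing this form shows that $\spectrum{L_s}=\{\real{\lambda_i(L)}\}_{i=1}^n$ as a multiset. Then $(ii)\Rightarrow(iii)$ is immediate: $0$ simple and all other $\real{\lambda_i}>0$ translate into $L_s$ psd of corank~$1$. For $(iii)\Rightarrow(ii)$, $L_s$ psd of corank~$1$ forces exactly one eigenvalue of $L$ to have zero real part; since purely imaginary eigenvalues of a real matrix come in conjugate pairs $\pm ib$ with $b\ne 0$, that zero-real-part eigenvalue cannot be imaginary, so it is $0$ itself (and it is present because $L\1=0$). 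All remaining eigenvalues have strictly positive real part, giving~(ii).

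The main subtlety, and where I would spend the most care, is the forward direction $(i)\Rightarrow(ii)$: one must rule out that the shift $d$ produced by Lemma~\ref{lemma:evenexppos} is degenerate ($d=0$) and must identify $\1$ as \emph{the} Perron eigenvector of $M_d$ rather than an auxiliary nonnegative eigenvector; weight balance is exactly what makes this work on both $M_d$ and $M_d^T$. The normal-case step $(iii)\Rightarrow(ii)$ also requires the small real-matrix argument that excludes purely imaginary eigenvalues when $L_s$ has corank~$1$; everything else reduces to the standard spectral dictionary between $L$ and $L_s$ for normal $L$.
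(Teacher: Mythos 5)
Your proposal is correct, but it is more self-contained than the paper on the first equivalence. For (i)$\Leftrightarrow$(ii) the paper simply cites Corollary~2 of \cite{Altafini2019Investigating}, whereas you reprove it from Theorem~\ref{thm:evenpos} and Lemma~\ref{lemma:evenexppos}: the shift $M_d=dI-L$ with $d>\abs{\lambda}^2/(2\real{\lambda})$ over the nonzero spectrum for (ii)$\Rightarrow$(i), and for (i)$\Rightarrow$(ii) the identification of $\1$ as the Perron vector of $M_d$ (ruling out $d=0$ and $d\ne\rho(M_d)$), with weight balance supplying the same structure for $M_d^T$ so that both lie in $\PF$. This is in substance the very argument the paper itself uses later, in the proof of Theorem~\ref{thm:directed}, to establish eventual exponential positivity of $-\Ldagger$, so your route buys a self-contained proof at no extra cost, while the paper's citation keeps the exposition short. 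For the normal case your argument coincides with the paper's: the real orthogonal block form gives $\real{\lambda_i(L)}=\lambda_i(L_s)$ as multisets, and your extra step for (iii)$\Rightarrow$(ii) — excluding a purely imaginary eigenvalue of zero real part via conjugate pairing, so that the single zero of $L_s$ must be the eigenvalue $0$ of $L$ (present since $L\1=0$ and semisimple by normality) — is a detail the paper leaves implicit. One remark: you read ``$-L$ marginally stable of corank $1$'' as ``$0$ simple and $\real{\lambda_i(L)}>0$ for $i\ge 2$''; this is exactly the reading the paper adopts in its own proofs, so no gap arises from it.
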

\begin{proof}
(i)$\Longleftrightarrow$(ii) See \cite[Corollary 2]{Altafini2019Investigating}.
   
(ii)$\Longleftrightarrow$(iii): If $L$ is normal, then there exists an orthonormal matrix $ U$ such that $ L = U D U^T $, where, if  $ \mu_1, \ldots, \mu_k $ are the real eigenvalues of $ L $ and $ \nu_1\pm i \omega_1, \ldots , \nu_\ell \pm i \omega_\ell $ are its complex conjugate eigenvalues:
\begin{equation*}
	D = \mu_1 \oplus \cdots \oplus \mu_k \oplus
			\begin{bmatrix}  \nu_1 & \omega_1 \\-\omega_1 & \nu_1 \end{bmatrix} \oplus \cdots \oplus
			\begin{bmatrix}  \nu_\ell & \omega_\ell \\-\omega_\ell & \nu_\ell
			\end{bmatrix}
\end{equation*}
where $\oplus$ indicates direct sum.
If follows that $ L_s = \frac{L + L^T}{2} =  \frac{1}{2} U (D + D^T) U^T $ and therefore that $ {\rm Re}[\lambda_i (L) ] = \lambda_i (L_s) $. 
\end{proof}

Observe that Theorem~\ref{thm:Altafini2019} does not explicitly assume that $\G(A)$ is strongly connected. However, as we will show later in Lemma~\ref{lemma:irreducible}, any of the conditions (i) or (ii) implies that $L$ is irreducible (i.e., that $\G(A)$ is strongly connected).

\begin{remark}
Corollary 2 of \cite{Altafini2019Investigating} claims that the equivalence (ii) $ \Longleftrightarrow $ (iii) is valid in the more general case of weight balance $L$. Unfortunately that result is not true as the following Example~\ref{example:Ls_nonpsd} shows.
A complication arises for instance from the fact that for $ L$ weight balanced but not normal $ L_s $ may acquire negative diagonal elements even if $ -L $ is marginally stable. $ L_s $ with negative diagonal elements obviously cannot be psd. However, even when $ L_s $ has positive diagonal it is not guaranteed to be psd, see Example~\ref{example:Ls_nonpsd_posdiag}.
\end{remark}

\begin{example}\label{example:Ls_nonpsd}
In correspondence of 
\begin{equation*}
L=\begin{bmatrix} 
0.15&0&0&-0.15\\
-0.23&0.15&0.15&-0.07\\
0.01&-0.12&-0.03&0.14\\
0.07&-0.03&-0.12&0.08
\end{bmatrix}
\end{equation*}
it is $\spectrum{L}=\{0, 0.0901\pm 0.199 i ,0.169\}$, i.e., $-L$ is marginally stable of corank $1$. Moreover, $L\1 = L^T\1=0$ and, for $d>0.2647$, $B= d I -L \epos 0$. 
However, $\spectrum{L_s}=\{-0.0402, 0, 0.1248 ,0.2655\}$, i.e., $L_s$ is not psd.
\end{example}

\begin{example}\label{example:Ls_nonpsd_posdiag}
For
\begin{equation*}
L=\begin{bmatrix} 
0.23&0&-0.28&0.05\\
-0.01&0.03&0.02&-0.04\\
0.05&-0.03&0.04&-0.06\\
-0.27&0&0.22&0.05
\end{bmatrix}
\end{equation*}
it is $\spectrum{L}=\{0, 0.1443\pm 0.1859 i ,0.0514\}$, i.e., $-L$ is marginally stable of corank $1$. Moreover, $L\1 = L^T\1=0$ and, for $d>0.1919$, $B= d I -L \epos 0$.
However, $\spectrum{L_s}=\{-0.0446, 0, 0.0404 ,0.3441\}$, i.e., $L_s$ is not psd.
\end{example}

As already mentioned, for signed Laplacians, irreducibility does not imply corank 1. When we have weight balance, however, the opposite is true.
\begin{lemma}\label{lemma:irreducible}
Let $\G(A)$ be a signed digraph with Laplacian $L$. 
If $-L$ is eventually exponentially positive or if $L$ is weight balanced and of corank $1$, then $L$ is irreducible.
\end{lemma}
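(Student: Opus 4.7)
The statement provides two independent sufficient conditions, so the plan is to treat them separately.

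The eventually exponentially positive case is a short reduction. By Lemma~\ref{lemma:evenexppos}, there exists $d \ge 0$ such that $B := -L + dI \epos 0$, i.e., $B^k > 0$ for all sufficiently large $k$. Reducibility of $L$ is equivalent to that of $B$ (the two differ only by a scalar multiple of the identity), and reducibility is preserved under powers, since a block-triangular matrix raised to any power remains block triangular with the same zero block. This contradicts strict positivity of $B^k$, forcing $L$ to be irreducible.

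The weight balanced corank-$1$ case is more delicate, and is where I would focus the effort. Argue by contradiction: assume $L$ is reducible and pick a permutation $P$ putting $P^T L P$ in block-triangular form, say $\begin{bmatrix} L_{11} & 0 \\ L_{21} & L_{22} \end{bmatrix}$ (the upper-triangular case being symmetric). The key observation is that weight balance yields \emph{two} constraints, $L\1 = 0$ and $L^T \1 = 0$, and reading them off the two diagonal blocks gives $L_{11}\1_{n_1} = 0$ and $L_{22}^T \1_{n_2} = 0$, respectively; both diagonal blocks are therefore singular. From any nonzero $w \in \ker{L_{22}}$ one obtains a kernel vector $\begin{bmatrix} 0 \\ w \end{bmatrix}$ of $P^T L P$, linearly independent from the always-present $P^T \1 = \begin{bmatrix} \1_{n_1} \\ \1_{n_2} \end{bmatrix}$, since its upper block vanishes while that of $P^T \1$ is $\1_{n_1} \ne 0$. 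Hence $\corank{L} \ge 2$, contradicting the hypothesis.

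The main obstacle is the double-singularity observation above: it requires using both $L\1 = 0$ (built into every signed Laplacian) and $L^T \1 = 0$ (available precisely because of weight balance). Without weight balance one of the diagonal blocks could be nonsingular and no second kernel direction would arise, so this step is what carries the whole argument; once both blocks are known to be singular, exhibiting the second kernel vector is immediate.
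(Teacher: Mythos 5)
Your proof is correct and takes essentially the same route as the paper: the eventually exponentially positive case via Lemma~\ref{lemma:evenexppos} and the fact that the zero block of a block-triangular matrix persists under powers, and the weight-balanced case by contradiction, using both $L\1=0$ and $L^T\1=0$ to force singularity of both diagonal blocks and hence $\corank{L}\ge 2$. Your explicit construction of a second kernel vector merely spells out the final step that the paper states more tersely (it only notes $0\in\spectrum{L_{11}}$ and $0\in\spectrum{L_{22}}$), so it is the same argument, slightly more detailed.
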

\begin{proof}
In both statements assume, by contradiction, that $L$ is reducible, i.e., there exists a permutation matrix $P$ s.t. $P^T L P = \begin{bmatrix} L_{11} & L_{12} \\ 0 & L_{22} \end{bmatrix}$. 

Assume that $-L$ is eventually exponentially positive, i.e., $\exists\, d\ge 0$ s.t. $B=d I-L\epos 0$ (see Lemma~\ref{lemma:evenexppos}). 
Then $B$ is also reducible, since $P^T B P = \begin{bmatrix} d I-L_{11} & -L_{12} \\ 0 & d I-L_{22} \end{bmatrix}$. It follows that $(P^T B P)^k = \begin{bmatrix} (d I-L_{11})^k & \ast \\ 0 & (d I-L_{22})^k \end{bmatrix}$ for all $k\ge 1$, i.e., $P^T B P$ is not eventually positive and, consequently, $B$ is not eventually positive.

Assume that $L$ is weight balanced of corank $1$. Then $L\1=L^T\1=0$ implies that $0 \in \spectrum{L_{11}^T}=\spectrum{L_{11}}$ and that $0\in \spectrum{L_{22}}$. 
Consequently, $L$ is not of corank $1$.
\end{proof}

\begin{remark}
    For a signed digraph $\G(A)$ it holds that if $L_s$ is psd of corank $1$ then $L$ is EP (see \cite{LewisNewman1968}) and hence weight balanced, and $-L$ is marginally stable of corank $1$. Therefore, in Theorem~\ref{thm:Altafini2019}, the assumption that the Laplacian $L$ is a normal matrix is sufficient to prove that $L_s$ is psd of corank $1$ but not necessary. For example, for
    \begin{equation*}
    L=\begin{bmatrix} 
        1&1&-1&-1\\ -1&1&0&0\\
        -1&-1&2&0\\  1&-1&-1&1
        \end{bmatrix},
    \end{equation*}
    which is not normal, it is $\spectrum{L}=\{0, 1.5\pm 1.323 i ,2\}$, i.e., $-L$ is marginally stable of corank $1$, and $\spectrum{L_s}=\{0, 0.7192,1.5, 2.7808\}$, i.e., $L_s$ is psd of corank $1$.
\end{remark}
We now show that the same statements of Theorem~\ref{thm:Altafini2019} hold also for the pseudoinverse $\Ldagger$ of $L$.
Moreover, we show that $-L$ is eventually exponentially positive (and marginally stable) if and only if $-L^\dagger$ is.
These results are summarized in the following theorem.
\begin{theorem}\label{thm:directed}
	Let $\G(A)$ be a directed signed network such that the corresponding Laplacian $L$ is weight balanced. Let $\Ldagger$ be the weight balanced pseudoinverse of $L$. 
	Then, the following conditions are equivalent:
	\begin{enumerate}[label=(\roman*)]
	    \item $-L$ is eventually exponentially positive;
	    \item $-\Ldagger$ is marginally stable of corank $1$;
	    \item $-\Ldagger$ is eventually exponentially positive.
	\end{enumerate}
	Furthermore, if $L$ (equivalently, $\Ldagger$) is normal then (i)$\div$(iii) are equivalent to
	\begin{enumerate}[resume,label=(\roman*)]
	    \item $\Ldagger_s = \frac{\Ldagger+(\Ldagger)^T}{2}$ is psd of corank $1$.
	\end{enumerate}
\end{theorem}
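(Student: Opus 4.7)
The plan is to reduce everything to Theorem~\ref{thm:Altafini2019} applied twice, once to $L$ and once to $\Ldagger$, with a spectral reciprocation argument linking the two. The key observation is that weight balance of $L$ forces $\ker{L}=\ker{L^T}=\vspan{\1}$, so $L$ is an EP matrix. By the EP characterization recalled in Section~\ref{sec:preliminaries}, there exists orthogonal $U$ such that
\begin{equation*}
L = U \begin{bmatrix} 0 & 0 \\ 0 & B \end{bmatrix} U^T, \qquad \Ldagger = U \begin{bmatrix} 0 & 0 \\ 0 & B^{-1} \end{bmatrix} U^T,
\end{equation*}
with $B$ nonsingular of dimension $n-1$. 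From this representation it follows immediately that $\ker{\Ldagger}=\ker{(\Ldagger)^T}=\vspan{\1}$, so $\Ldagger$ is itself weight balanced (and EP), which justifies the phrasing in the statement.

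First I would establish (i)$\Leftrightarrow$(ii). By Theorem~\ref{thm:Altafini2019} applied to $L$, (i) is equivalent to $-L$ being marginally stable of corank $1$. The nonzero eigenvalues of $\Ldagger$ are exactly the reciprocals of those of $L$ (they coincide with $\mathrm{sp}(B^{-1})$), while the kernel is preserved; since $\real{1/\lambda}=\real{\lambda}/|\lambda|^2$, reciprocation preserves the sign of the real part of every nonzero eigenvalue. Furthermore, EP matrices have index $1$, so both $L$ and $\Ldagger$ have simple zero as a root of their minimal polynomial. Hence $-L$ is marginally stable of corank $1$ if and only if $-\Ldagger$ is, yielding (i)$\Leftrightarrow$(ii). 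For (ii)$\Leftrightarrow$(iii), I would apply Theorem~\ref{thm:Altafini2019} directly to $\Ldagger$ (which is allowed since $\Ldagger$ is weight balanced by the preceding step), turning (iii) into the statement that $-\Ldagger$ is marginally stable of corank $1$, i.e., into (ii).

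Finally, for the normal case, the block decomposition above shows that if $L$ is normal then so is $\Ldagger$ (normality of $L$ forces $B$ to be normal, hence $B^{-1}$ is normal). I can then invoke the equivalence (ii)$\Leftrightarrow$(iii) of Theorem~\ref{thm:Altafini2019} applied to $\Ldagger$, which gives that (iii) is equivalent to $\Ldagger_s$ being psd of corank $1$, i.e., to (iv). The main obstacle I anticipate is the careful bookkeeping of corank, index and the ``simple root of the minimal polynomial'' condition under pseudoinversion; the EP structure handles all of these cleanly, but one must explicitly invoke the fact that EP matrices are index $1$ and that Moore--Penrose and group inverse coincide for them, so that no Jordan obstruction can appear at the zero eigenvalue of $\Ldagger$.
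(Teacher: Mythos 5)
Your proposal is correct in substance and pivots on the same key fact as the paper---Theorem~\ref{thm:Altafini2019} applied twice, once to $L$ and once to $\Ldagger$---but the mechanics are genuinely different. You obtain the spectral link between $L$ and $\Ldagger$ from the orthogonal EP block form $L=U(0\oplus B)U^T$, $\Ldagger=U(0\oplus B^{-1})U^T$, and then argue by reciprocation of the nonzero spectrum plus the index-$1$ property; normality transfer ($L$ normal iff $B$ normal iff $B^{-1}$ normal) comes for free from the same block form. The paper instead routes everything through Lemma~\ref{lemma:properties_Ldagger}, in particular the identity $\Ldagger=(L+\gamma J)^{-1}-\frac{1}{\gamma}J$, proves marginal stability of $-\Ldagger$ by an eigenvalue-shifting argument, and re-derives eventual positivity explicitly by exhibiting $d>\max_i \frac{\abs{\lambda_i(\Ldagger)}^2}{2\real{\lambda_i(\Ldagger)}}$ so that $dI-\Ldagger$ and its transpose lie in $\PF$, then invoking Theorem~\ref{thm:evenpos} and Lemma~\ref{lemma:evenexppos}; you get the same conclusion by citing Theorem~\ref{thm:Altafini2019} on $\Ldagger$, which is legitimate and arguably cleaner. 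What the paper's identity buys, beyond this theorem, is that it also powers Theorem~\ref{thm:directed_nonneg}; what your block form buys is lighter spectral bookkeeping and an explicit handle on Jordan structure at $0$.

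One slip needs fixing: ``weight balance of $L$ forces $\ker{L}=\ker{L^T}=\vspan{\1}$'' is false without corank $1$. For instance $L=uv^T$ with $u=(1,-1,0)^T$, $v=(1,0,-1)^T$ is a weight balanced signed Laplacian (zero row and column sums, correct diagonal) with $\ker{L}\ne\ker{L^T}$, so the EP decomposition with a $1\times 1$ zero block and first column of $U$ equal to $\1/\sqrt{n}$ cannot be asserted from the theorem's hypotheses alone. The repair is a one-liner and does not change your architecture: each of (i)--(iii) supplies corank $1$ before you need the decomposition. Under (i), Theorem~\ref{thm:Altafini2019} gives $-L$ marginally stable of corank $1$; under (ii), $\corank{L}=\corank{\Ldagger}=1$ since pseudoinversion preserves rank; under (iii), note that $\Ldagger$ always has zero row and column sums (because $\ker{\Ldagger}=\ker{L^T}\ni\1$ and $\ker{(\Ldagger)^T}=\ker{L}\ni\1$, regardless of corank), so Theorem~\ref{thm:Altafini2019} applies to $\Ldagger$ and yields corank $1$. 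With corank $1$ and weight balance, $L$ is EP, your block form is valid, and the rest of your argument (reciprocation preserves the sign of real parts, Jordan blocks at nonzero eigenvalues are preserved under inversion, index $1$ keeps $0$ semisimple) goes through verbatim, as does the normal case (iv).
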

The proof of Theorem~\ref{thm:directed} relies on some considerations and propositions that we state first.

Since a weight balanced $L$ (of corank $1$) is a EP-matrix, its left and right orthogonal projectors onto $\range{L}$ are identical and given by $\Pi = I-J$, with $J=\frac{\1 \1^T}{n}$. Furthermore it is $\lim_{t\to \infty} e^{-Lt} = J$.
The following properties for $ J $ can be found in \cite{Bullo2020Lectures} (\cite{BenIsrealGreville2003,Meyer2000}) or computed straightforwardly.
\begin{lemma}\label{lemma:properties_J}
	The matrix $J$ has the following properties:
	\begin{enumerate}
		\item $J^k = J$ $\forall k\in \N$ which implies that $(I-J)^k=(I-J)$ $\forall k\in \N$;
		\item $J L = LJ = 0$ which implies that $e^{-(L+J)}=e^{-L}e^{-J}$ and $Je^{-L}=e^{-L} J=J$;
		\item $e^{-Jt}=I-J+Je^{-t}$ which implies that $Je^{-Jt}=e^{-Jt}J = Je^{-t}$.
	\end{enumerate}
\end{lemma}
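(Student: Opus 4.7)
The plan is to verify each of the three properties by direct computation, since they all follow from the explicit formula $J = \frac{\1\1^T}{n}$ together with the weight balance assumption $L\1 = 0$, $\1^T L = 0$ (which gives $\ker{L} = \ker{L^T} = \vspan{\1}$ as already observed).

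For property (1), I would first compute $J^2 = \frac{\1\1^T}{n}\cdot\frac{\1\1^T}{n} = \frac{\1(\1^T\1)\1^T}{n^2} = \frac{\1\cdot n\cdot\1^T}{n^2} = J$, so $J$ is idempotent; an induction immediately gives $J^k = J$ for all $k\ge 1$. The identity $(I-J)^k = I-J$ then follows by a binomial expansion, using $J^k = J$ to collapse the alternating sum $\sum_{j=0}^{k}\binom{k}{j}(-J)^j = I + J\sum_{j=1}^{k}\binom{k}{j}(-1)^j = I + J((1-1)^k - 1) = I-J$, or equivalently just by noting that $I-J$ is itself a projector (idempotent).

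For property (2), the weight balance hypothesis gives $L\1 = 0$ and $\1^T L = 0$, so
\begin{equation*}
LJ = \tfrac{1}{n}(L\1)\1^T = 0, \qquad JL = \tfrac{1}{n}\1(\1^T L) = 0.
\end{equation*}
Since $L$ and $J$ commute (both products vanish), the standard exponential product formula yields $e^{-(L+J)} = e^{-L}e^{-J}$. Moreover, $JL = 0$ implies $JL^k = 0$ for every $k\ge 1$, so expanding the series gives $Je^{-L} = J\sum_{k\ge 0}\frac{(-L)^k}{k!} = J$; symmetrically, $LJ = 0$ yields $e^{-L}J = J$.

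For property (3), I would expand the exponential series and use $J^k = J$ from property (1):
\begin{equation*}
e^{-Jt} = I + \sum_{k\ge 1}\frac{(-t)^k}{k!}J^k = I + J\sum_{k\ge 1}\frac{(-t)^k}{k!} = I + J(e^{-t}-1) = I - J + Je^{-t}.
\end{equation*}
Multiplying on either side by $J$ and using $J^2 = J$ gives $Je^{-Jt} = J - J + Je^{-t} = Je^{-t}$, and likewise $e^{-Jt}J = Je^{-t}$.

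There is no real obstacle here: every step is a one-line algebraic identity. The only place any structure of the graph enters is in property (2), through the weight balance condition, which is exactly what makes $J$ act as the spectral projector onto $\ker{L} = \ker{L^T}$.
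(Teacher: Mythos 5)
Your proof is correct and is exactly the ``straightforward computation'' that the paper alludes to: the paper itself gives no proof of Lemma~\ref{lemma:properties_J}, merely citing \cite{Bullo2020Lectures,BenIsrealGreville2003,Meyer2000} and noting the properties can be verified directly. Your verification of idempotency of $J$, the vanishing products $LJ=JL=0$ from $L\1=\1^TL=0$ under weight balance, and the series expansion of $e^{-Jt}$ supplies precisely the omitted details with no gaps.
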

We have the following properties for the Laplacian pseudoinverse.
\begin{lemma}\label{lemma:properties_Ldagger}
	If $ L $ is weight balanced and of corank $1$, then $ \Ldagger $ is weight balanced and of corank $1$. For it
	\begin{gather}
		L \Ldagger =\Ldagger L = \Pi 	\label{eqn:pseudoinv_wb_1}\\
		\Ldagger \1 = (\Ldagger)^T \1 = 0 \label{eqn:pseudoinv_wb_2}\\
		\Ldagger \Pi = \Pi \Ldagger = \Ldagger \label{eqn:pseudoinv_wb_3}\\
		\Ldagger = (L+\gamma J)^{-1} -\frac{1}{\gamma} J \quad \forall \gamma \ne 0.\label{eqn:pseudoinv_wb_4}
	\end{gather}
	Furthermore, if $L$ is normal then $\Ldagger$ is normal.
\end{lemma}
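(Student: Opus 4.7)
My plan is to exploit the fact that a weight balanced Laplacian of corank $1$ satisfies $\ker{L}=\ker{L^T}=\vspan{\1}$, and is therefore an EP matrix. First I would use the general identities $\ker{\Ldagger}=\ker{L^T}$ and $\range{\Ldagger}=\range{L^T}$ satisfied by any Moore-Penrose pseudoinverse to conclude $\ker{\Ldagger}=\ker{(\Ldagger)^T}=\vspan{\1}$, which gives \eqref{eqn:pseudoinv_wb_2} and the weight balance of $\Ldagger$. Corank $1$ of $\Ldagger$ then follows from $\rank{\Ldagger}=\rank{L}=n-1$. Since $L$ is EP it commutes with $\Ldagger$, and the common product $L\Ldagger=\Ldagger L$ is the orthogonal projector onto $\range{L}=\vspan{\1}^\perp$, i.e.\ $\Pi=I-J$; this yields \eqref{eqn:pseudoinv_wb_1}. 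Identity \eqref{eqn:pseudoinv_wb_3} is then obtained by substituting \eqref{eqn:pseudoinv_wb_1} into the Moore-Penrose axiom $\Ldagger L \Ldagger=\Ldagger$.

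For the resolvent-type identity \eqref{eqn:pseudoinv_wb_4}, I would first verify that $L+\gamma J$ is nonsingular for every $\gamma\ne 0$: on the invariant subspace $\vspan{\1}$ the operator acts as multiplication by $\gamma$ (because $L\1=0$ and $J\1=\1$), while on its orthogonal complement $\vspan{\1}^\perp$ the term $J$ vanishes and $L$ restricted to that subspace is invertible (since $\ker{L}=\vspan{\1}$). Once invertibility is secured, a direct computation using $LJ=JL=0$ (weight balance of $L$), $J\Ldagger=\Ldagger J=0$ (weight balance of $\Ldagger$ just proved), and the idempotency $J^2=J$ from Lemma~\ref{lemma:properties_J} gives
\[
(L+\gamma J)\left(\Ldagger+\frac{1}{\gamma}J\right) = L\Ldagger + J^2 = \Pi + J = I,
\]
so that $(L+\gamma J)^{-1}=\Ldagger+\frac{1}{\gamma}J$, which rearranges to \eqref{eqn:pseudoinv_wb_4}.

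Finally, for the normality claim I would invoke the fact that a real normal matrix admits a unitary diagonalization $L=U\Lambda U^*$ with $U\in\C^{n\times n}$ unitary and $\Lambda$ diagonal; since the Moore-Penrose pseudoinverse of a diagonal matrix is obtained entrywise, $\Ldagger=U\Lambda^\dagger U^*$, and commutativity of the diagonal matrices $\Lambda^\dagger$ and $(\Lambda^\dagger)^*$ gives $\Ldagger(\Ldagger)^*=(\Ldagger)^*\Ldagger$; reality of $\Ldagger$ then replaces $*$ by $T$, proving normality. I expect the main subtlety to be the invertibility argument for $L+\gamma J$, which relies on identifying $\vspan{\1}$ and $\vspan{\1}^\perp$ as invariant subspaces under both $L$ and $J$ (exploiting the weight balance of $L$); all other steps reduce to routine manipulations once the EP structure has been recognized.
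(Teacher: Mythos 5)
Your proposal is correct, and for the identities \eqref{eqn:pseudoinv_wb_1}--\eqref{eqn:pseudoinv_wb_4} it follows essentially the paper's path (EP structure of a weight balanced corank-$1$ Laplacian, then the verification $(L+\gamma J)(\Ldagger+\tfrac{1}{\gamma}J)=\Pi+J=I$), but several sub-steps are argued differently. For \eqref{eqn:pseudoinv_wb_2} and the weight balance/corank claim you invoke the general identities $\ker{\Ldagger}=\ker{L^T}$, $\range{\Ldagger}=\range{L^T}$ and $\rank{\Ldagger}=\rank{L}$, which gives $\ker{\Ldagger}=\ker{(\Ldagger)^T}=\vspan{\1}$ in one stroke; the paper instead derives $(\Ldagger)^T\1=0$ from the Moore--Penrose symmetry axiom $(\Ldagger L)^T=\Ldagger L$ together with irreducibility of $L$ (via its Lemma~\ref{lemma:irreducible}), and then uses \eqref{eqn:pseudoinv_wb_4} to conclude weight balance and corank $1$ of $\Ldagger$; your route is a bit cleaner and avoids the intermediate irreducibility step at this point. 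You also supply an explicit invariant-subspace argument for the nonsingularity of $L+\gamma J$ (decomposing along $\vspan{\1}$ and $\vspan{\1}^\perp$, the latter $L$-invariant by weight balance), a fact the paper only asserts with a citation, so this is a welcome addition. The largest divergence is the normality claim: you use the complex unitary diagonalization $L=U\Lambda U^*$ and the entrywise pseudoinverse of $\Lambda$ (with the standard remark that the Moore--Penrose pseudoinverse of a real matrix is real, by uniqueness), whereas the paper stays real and algebraic, showing $L+\gamma J$ is normal, hence so is $(L+\gamma J)^{-1}$, and then using $(L+\gamma J)^{-1}J=\tfrac{1}{\gamma}J=J(L+\gamma J)^{-1}$ in formula \eqref{eqn:pseudoinv_wb_4}. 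Both are valid; your spectral argument is shorter and more self-contained, while the paper's argument reuses the machinery it has already built and avoids passing to $\C$.
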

\begin{proof}
    Assume that $L$ is weight balanced.
	Eq.~\eqref{eqn:pseudoinv_wb_1}-\eqref{eqn:pseudoinv_wb_4} are all well-known for $ L $ symmetric, and follow easily also for EP matrices. They are proven here only for sake of completeness.
	Eq.~\eqref{eqn:pseudoinv_wb_2} is a consequence of $ L $ commuting with $\Ldagger$.
	As for eq.~\eqref{eqn:pseudoinv_wb_2}, from $(\Ldagger L)^T=\Ldagger L$	it follows that $L^T (\Ldagger)^T \1 = 0$. Since $ L $ is irreducible, $L^T v \ne 0$ for $v\ne c\1$ ($c \in \R$), hence it must be $(\Ldagger)^T\1 = 0$ or $\1^T \Ldagger = 0$, i.e., $\Ldagger$ has $\1$ as left eigenvector relative to $0$. The proof for the right eigenvector is identical.
	Concerning eq.~\eqref{eqn:pseudoinv_wb_3}, from $\Ldagger\1 = 0$ it is
	$\Ldagger \Pi = \Ldagger(I-\frac{\1 \1^T}{n})=\Ldagger$, and similarly for $\Pi \Ldagger=\Ldagger$.
	For eq.~\eqref{eqn:pseudoinv_wb_4}, since $L+\gamma J$ is nonsingular, as in \cite{DorflerBullo2013}, it is enough to show the following:
	\begin{align*}
		(L+\gamma J)(\Ldagger+\frac{1}{\gamma} J)
		&=L \Ldagger + \gamma J \Ldagger + \frac{1}{\gamma} LJ + J^2
		\\&=\Pi + J = I-J+J=I,
	\end{align*}
	where we have used the properties of Lemma~\ref{lemma:properties_J}.
	
	Then, $\ker{L}=\ker{L^T}=\ker{\Ldagger}=\ker{(\Ldagger)^T}=\vspan{\1}$ and \eqref{eqn:pseudoinv_wb_4} imply that $\Ldagger$ is weight balanced of corank $1$. Notice that irreducibility of $L$ and $\Ldagger$ follows from Lemma~\ref{lemma:irreducible}.
	
	Finally, we need to show that if $L$ is normal then $\Ldagger$ is normal. $L$ normal, $J$ symmetric and $L J= L^T J= JL =J L^T=0$ imply $L+\gamma J$ normal, which means that $(L+\gamma J)^{-1}$ is also normal. Since $J$ is symmetric (hence normal) and satisfies the properties of Lemma~\ref{lemma:properties_J}, to show that $\Ldagger$ is normal it is sufficient to observe that $(L+\gamma J)^{-1} J = \frac{1}{\gamma} J = J (L+\gamma J)^{-1}$.
\end{proof}

We are now ready to prove Theorem~\ref{thm:directed}.\\
\begin{proof}
(i)$\Longrightarrow$(ii) To show marginal stability of $-\Ldagger$, denote $\lambda_i(L)$ the eigenvalues of $ L $, of eigenvectors $\1, v_2,\dots v_n$. 
Using Theorem~\ref{thm:Altafini2019}, since $-L$ is eventually exponentially positive then $-L$ is also marginally stable of corank $1$, meaning that $0 = \lambda_1(L) < \real{\lambda_2(L)} \le \dots \le  \real{\lambda_n(L)} $. 
Consider eq.~\eqref{eqn:pseudoinv_wb_4} of Lemma~\ref{lemma:properties_Ldagger}.
Choosing $\gamma\ne 0$, since $ J $ is the orthogonal projection onto $\ker{L}= \ker{L^T} = \mathrm{span}(\1)$,	the effect of adding $\gamma J$ to $ L $ is only to shift the $ 0 $ eigenvalue to $\gamma$, while $\lambda_2(L),\dots, \lambda_n(L)$ are unchanged (see \cite[Thm 2.4.10.1]{HornJohnson2013}). 
For the nonsingular $L+\gamma J$ the inverse $(L+\gamma J)^{-1}$ has eigenvalues $\frac{1}{\gamma},\frac{1}{\lambda_2(L)},\dots, \frac{1}{\lambda_n(L)}$ of eigenvectors $\1, v_2,\dots v_n$. From orthogonality, $(L+\gamma J)^{-1}-\frac{1}{\gamma}J$ only shifts the $ \frac{1}{\gamma} $ eigenvalue back to the origin without touching the other eigenvalues.
	
(i)$\Longrightarrow$(iii) Assume that $-L$ is eventually exponentially positive, that is, $-L$ is marginally stable of corank $1$ (see Theorem~\ref{thm:Altafini2019}).
Then $-\Ldagger$ is also marginally stable of corank $1$, see Lemma~\ref{lemma:properties_Ldagger} and proof (i)$\Longrightarrow$(ii). 
To prove that $-\Ldagger$ is eventually exponentially positive, we can use Theorem~\ref{thm:Altafini2019}. The proof is here reported for completeness.
In particular, from Lemma~\ref{lemma:properties_Ldagger}, we know that $-\Ldagger$ is marginally stable with $0 = \lambda_1(\Ldagger) < \real{\lambda_2(\Ldagger)} \le \dots \le  \real{\lambda_n(\Ldagger)} $ and with $\1$ as left/right eigenvector for $0$.
If we choose $d > \max_{i=2,\dots,n} \frac{\abs{\lambda_i(\Ldagger)}^2}{2 \real{\lambda_i(\Ldagger)}}$, then $B = dI - \Ldagger$ has $\rho(B)=d$ as a simple eigenvalue of eigenvector $\1$ and so does $B^T$.
Hence $B, B^T \in \PF$, or, from Theorem~\ref{thm:evenpos}, $B \epos 0$, i.e., B is eventually positive. Hence from Lemma~\ref{lemma:evenexppos} $\Ldagger$ is eventually exponentially positive. 

(iii)$\Longrightarrow$(i) Since $\Ldagger$ is weight balanced of corank $1$ with $\mathrm{span}(\1) = \ker{\Ldagger} = \ker{(\Ldagger)^T}$, it is itself a signed Laplacian.
The argument can be proven in a similar way as the opposite direction, observing that $L=(\Ldagger)^\dagger$.
\smallskip

(iv) Assume now that $L$ is normal or, equivalently, that $\Ldagger$ is normal (see Lemma~\ref{lemma:properties_Ldagger}).
Since $L$ normal implies $L$ weight balanced, the statements (i), (ii), and (iii) are still equivalent.
To show the equivalence with (iv) it is sufficient to apply Theorem~\ref{thm:Altafini2019} on $\Ldagger$ since $\Ldagger$ is itself a normal signed Laplacian of corank $1$. 
\end{proof}

The following corollary, characterizing the class of eventually exponentially positive Laplacian matrices, follows directly from Theorem~\ref{thm:directed}.
\begin{corollary}\label{cor:class_exppos_directed}
The class of eventually exponentially positive, weight balanced Laplacian matrices is closed under the pseudoinverse operation.

The class of eventually exponentially positive, normal Laplacian matrices is closed under the pseudoinverse and the symmetrization operation (the latter intended as the operation of taking the symmetric part). 
\end{corollary}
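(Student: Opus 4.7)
The plan is to derive Corollary~\ref{cor:class_exppos_directed} directly from Theorem~\ref{thm:directed} and Lemma~\ref{lemma:properties_Ldagger}; the corollary is essentially a restatement of these results, so the work is bookkeeping rather than new computation.

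For the first statement, I let $L$ be a weight balanced signed Laplacian with $-L$ eventually exponentially positive. The implication (i)$\Rightarrow$(iii) of Theorem~\ref{thm:directed} gives that $-\Ldagger$ is eventually exponentially positive, and Lemma~\ref{lemma:properties_Ldagger} yields that $\Ldagger$ is weight balanced of corank $1$ with $\Ldagger \1 = (\Ldagger)^T \1 = 0$. The only subtlety is to verify that $\Ldagger$ still fits the definition of signed Laplacian from Section~\ref{sec:preliminaries}: since $\Ldagger \1 = 0$, setting $\tilde a_{ij} = -(\Ldagger)_{ij}$ for $i \ne j$ and $\tilde a_{ii} = 0$ recovers $(\Ldagger)_{ii}$ as the in-degree of node $i$ in $\G(\tilde A)$, so $\Ldagger$ is exactly the Laplacian of a signed digraph, and closure under pseudoinversion follows.

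For the second statement, I assume additionally that $L$ is normal. Closure under pseudoinversion in this enhanced class is the same argument together with the last sentence of Lemma~\ref{lemma:properties_Ldagger}, which guarantees that $\Ldagger$ is normal whenever $L$ is. For closure under symmetrization, I would note that normality of $L$ implies $L$ is EP (hence weight balanced), so $L \1 = L^T \1 = 0$ and $L_s = \frac{L+L^T}{2}$ satisfies $L_s \1 = 0$ and is a symmetric (hence normal) signed Laplacian. Applying Theorem~\ref{thm:Altafini2019}(iii) to $L$ gives that $L_s$ is psd of corank $1$; applying Theorem~\ref{thm:Altafini2019} once more to $L_s$ itself (which is trivially normal and weight balanced) then yields $-L_s$ eventually exponentially positive. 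The only conceptual obstacle I foresee is the semantic check that the outputs of both operations really qualify as signed Laplacians in the paper's sense; this reduces entirely to the zero-row-sum identities supplied by Lemma~\ref{lemma:properties_Ldagger} and the EP property induced by normality, so no genuinely new calculation is needed.
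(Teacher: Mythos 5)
Your argument is correct and matches the paper's route: the paper gives no separate proof, stating only that the corollary follows directly from Theorem~\ref{thm:directed}, and your write-up simply makes explicit the same ingredients (Theorem~\ref{thm:directed}(i)$\Rightarrow$(iii) with Lemma~\ref{lemma:properties_Ldagger} for pseudoinversion and normality, Theorem~\ref{thm:Altafini2019} for the symmetric part, plus the zero-row/column-sum check that $\Ldagger$ and $L_s$ are again signed Laplacians). The extra bookkeeping is sound, so nothing further is needed.
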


\begin{remark}
    Notice that the operations of pseudoinverse and of symmetrization do not commute, i.e., $\Ldagger_s\!=\!\frac{\Ldagger+(\Ldagger)^T}{2} \ne (L_s)^\dagger \!= \!(\frac{L+L^T}{2})^\dagger$, even in the case of a normal Laplacian $L$.
    Indeed, let $L=U D U^T$ with $U$ orthonormal and $D$ as in the proof of Theorem~\ref{thm:Altafini2019}. Without lack of generality, assume that the first column of $U$ is $\1\frac{1}{\sqrt{n}}$, which means that $D = 0 \oplus \bar D$ where $\bar D = \mu_2 \oplus \dots \oplus \mu_k \oplus \begin{bmatrix} \nu_1 & \omega_1 \\ -\omega_1 & \nu_1 \end{bmatrix}\oplus \dots \oplus \begin{bmatrix} \nu_\ell & \omega_\ell \\ -\omega_\ell & \nu_\ell \end{bmatrix}$ is nonsingular (here $\oplus$ denotes the direct sum). Then
    \begin{gather*}
        L_s =  U \begin{bmatrix} 0 & 0\\0 & \frac{\bar D+\bar D^T}{2}\end{bmatrix} U^T ,
        \quad 
        \Ldagger = U \begin{bmatrix} 0 & 0\\0 &  \bar D^{-1}\end{bmatrix} U^T,
        \\
        \Ldagger_s =  U \begin{bmatrix} 0 & 0\\0 & \frac{\bar D^{-1}+\bar D^{-T}}{2}\end{bmatrix} U^T,
      \end{gather*}
    and 
    \begin{align*}
        (L_s)^\dagger 
        &= U \begin{bmatrix} 0 & 0\\0 & (\frac{\bar D+\bar D^T}{2})^{-1}\end{bmatrix} U^T \ne \Ldagger_s.
    \end{align*}
\end{remark}

\begin{example}\label{example:Ls_nonpsd_cont}
For the Laplacian $L$ of Example~\ref{example:Ls_nonpsd}, we obtain 
\begin{equation*}
\Ldagger = \begin{bmatrix} 
2.25&-1.86&-0.19&-0.19\\
-1.42&1.58&-5.64&5.47\\
1.92&0.47&4.36&-6.75\\
-2.75&-0.19&1.47&1.47
\end{bmatrix}.
\end{equation*}
It is $\spectrum{\Ldagger}=\{0, 1.8888\pm 4.1709 i , 5.8891\}$ (notice that $\lambda_i(\Ldagger)=\frac{1}{\lambda_i(L)}$, $i=2,3,4$), i.e., $-\Ldagger$ is marginally stable of corank $1$. Moreover, $\Ldagger\1 = (\Ldagger)^T\1=0$ and, for $d>5.5495$, $B= d I -\Ldagger \epos 0$. 
However, $\spectrum{\Ldagger_s}=\{-1.1164, 0, 2.0926 ,8.6904\}$, i.e., $\Ldagger_s$ is not psd.
\end{example}

\begin{example}\label{example:normal_L}
The signed Laplacian 
\begin{equation*}
L=\begin{bmatrix} 
0.282&-0.072&0.191&-0.401\\
-0.072&0.252&0.008&-0.189\\
-0.401&-0.189&0.297&0.293\\
0.191&0.008&-0.496&0.297
\end{bmatrix}
\end{equation*}
is normal, and it is $\spectrum{L}=\{0, 0.3983 \pm 0.5920 i , 0.3311\}$, i.e., $-L$ is marginally stable of corank $1$. 
In accordance with Theorem~\ref{thm:directed}, $\spectrum{\Ldagger}=\{0, 0.7823 \pm 1.1628 i , 3.0204\}$, i.e., $-\Ldagger$ is marginally stable of corank $1$, $\spectrum{L_s}=\{0,0.3983,$ $0.3983, 0.3311\}$, i.e., $L_s$ is psd of corank $1$, and $\spectrum{\Ldagger_s}=\{0, 0.7823, 0.7823, 3.0204\}$, i.e., $\Ldagger_s$ is psd of corank $1$.
\end{example}

When the weight balanced digraph is nonnegative, the normality assumption in Theorem~\ref{thm:directed}(iv) can be dropped.
\begin{theorem}\label{thm:directed_nonneg}
	Let $\G(A)$ be a strongly connected nonnegative ($A\ge 0$) digraph such that the corresponding Laplacian $L$ is weight balanced. Let $\Ldagger$ be the irreducible weight balanced pseudoinverse of $L$.
	Then $\Ldagger_s \!=\! \frac{\Ldagger+(\Ldagger)^T}{2}$ is psd of corank $1$.
\end{theorem}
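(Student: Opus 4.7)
The approach rests on the observation that the hypothesis $A\ge 0$, absent from Theorem~\ref{thm:directed}, turns the symmetric part of $L$ into a bona fide standard Laplacian. Since $L$ is weight balanced, $L_s = \Sigma - \tfrac{A+A^T}{2}$ is the Laplacian of the undirected graph with nonnegative adjacency $\tfrac{A+A^T}{2}\ge 0$; strong connectivity of $\G(A)$ makes this undirected graph connected, so $L_s$ is psd of corank $1$ with $\ker{L_s}=\vspan{\1}$. This is precisely the property that in Theorem~\ref{thm:directed}(iv) required normality of $L$, and it is the one ingredient which nonnegativity of $A$ supplies for free.

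With this in hand I would fix any $\gamma > 0$, set $M = L + \gamma J$, and prove two intermediate claims. First, $M_s = L_s + \gamma J$ is positive definite: the two summands are psd and vanish on the complementary subspaces $\vspan{\1}^\perp$ and $\vspan{\1}$ respectively, so they cannot vanish simultaneously on any nonzero $x$. Second, whenever $M$ is invertible and $M_s > 0$, also $(M^{-1})_s > 0$: for $x\ne 0$, setting $y = M^{-1}x \ne 0$ gives $x^T M^{-1} x = y^T M^T y = y^T M_s y > 0$, where the skew part of $M$ drops out since $y^T(M-M^T)y=0$.

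The conclusion follows from the resolvent identity $\Ldagger = M^{-1} - \tfrac{1}{\gamma} J$ in~\eqref{eqn:pseudoinv_wb_4} together with $\Ldagger\1 = (\Ldagger)^T \1 = 0$ from Lemma~\ref{lemma:properties_Ldagger}. Writing any $x \in \R^n$ as $x = x_\parallel + x_\perp$ with $x_\parallel = Jx \in \vspan{\1}$ and $x_\perp = (I-J)x$, the annihilation of $\1$ by $\Ldagger$ on both sides yields $x^T \Ldagger x = x_\perp^T \Ldagger x_\perp = x_\perp^T M^{-1} x_\perp$, where the $J$-term drops because $Jx_\perp = 0$. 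By the second claim above this is $\ge 0$, with equality iff $x_\perp = 0$, i.e.\ $x \in \vspan{\1}$. Hence $\Ldagger_s$ is psd with $\ker{\Ldagger_s} = \vspan{\1}$, i.e.\ of corank $1$. The only genuinely nontrivial step is the first one — that $L_s$ is a standard, hence psd, graph Laplacian — and this is exactly where $A\ge 0$ is used; the rest is routine algebra built on the formulas in Lemma~\ref{lemma:properties_Ldagger}.
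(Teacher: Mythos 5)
Your proof is correct, and while it shares the paper's skeleton --- establish that $L_s$ is psd of corank $1$ thanks to $A\ge 0$ plus weight balance, then transfer this to $\Ldagger_s$ through the identity $\Ldagger=(L+\gamma J)^{-1}-\frac{1}{\gamma}J$ of Lemma~\ref{lemma:properties_Ldagger} --- the transfer step is carried out by a genuinely different argument. The paper cites \cite[Corollary~1]{Altafini2019Investigating} for the first step and then derives the explicit congruence $\Ldagger_s=(L+\gamma J)^{-1}L_s(L+\gamma J)^{-T}$, so that positive semidefiniteness and the corank are inherited by inertia. You instead (a) reprove the first step directly, observing that weight balance makes $L_s=\Sigma-\frac{A+A^T}{2}$ the ordinary Laplacian of a connected nonnegative undirected graph, and (b) replace the matrix identity by a quadratic-form argument: $L_s+\gamma J$ is pd because the two psd summands vanish only on the complementary subspaces $\vspan{\1}$ and $\vspan{\1}^\perp$, the symmetric part of the inverse of a matrix with pd symmetric part is again pd (via $x^TM^{-1}x=y^TM_sy$ with $y=M^{-1}x$), and the splitting $x=Jx+(I-J)x$ together with $\Ldagger\1=(\Ldagger)^T\1=0$ reduces $x^T\Ldagger x$ to $x_\perp^TM^{-1}x_\perp$, giving psd-ness with kernel exactly $\vspan{\1}$ (for a psd matrix the quadratic form vanishes only on the kernel). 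Your route is more elementary and self-contained --- it needs no inertia/congruence argument and no external citation --- at the cost of fixing $\gamma>0$ and handling the kernel decomposition by hand; the paper's congruence identity is slicker, works for any $\gamma\ne 0$, and exhibits the structural relation between $L_s$ and $\Ldagger_s$ explicitly, which is reusable elsewhere. One cosmetic remark: the invertibility of $M=L+\gamma J$ that you assume in your second claim is in fact a consequence of your first claim ($M_s$ pd implies $M$ nonsingular), so no appeal to the lemma is even needed there.
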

\begin{proof}
	In \cite[Corollary~1]{Altafini2019Investigating} it is shown that when $A\ge 0$, $L_s= \frac{L+L^T}{2}$ is psd of corank $1$ if and only if $\G(A)$ is weight balanced. Using \eqref{eqn:pseudoinv_wb_4} (see Lemma~\ref{lemma:properties_Ldagger}) we can write 
	\begin{align*}
		\Ldagger_s 
		&= \frac{(L+\gamma J)^{-1}+(L^T+\gamma J)^{-1}}{2}-\frac{1}{\gamma} J
		\\
		&= (L+\gamma J)^{-1}\frac{L^T+\gamma J+L+\gamma J}{2}(L^T+\gamma J)^{-1}-\frac{1}{\gamma} J
		\\
		&= (L+\gamma J)^{-1}(L_s+\gamma J)(L^T+\gamma J)^{-1}
		\\&\;\, -\frac{1}{\gamma}(L+\gamma J)^{-1} \bigl( (L+\gamma J) J(L^T+\gamma J)\bigr)(L^T+\gamma J)^{-1}
		\\
		&\overset{\ast}{=} (L+\gamma J)^{-1} (L_s+\gamma J - \gamma J ) (L^T+\gamma J)^{-1}
		\\
		&= (L+\gamma J)^{-1} L_s (L+\gamma J)^{-T},
	\end{align*}
	where in the step marked $\ast$ we have used the properties of $J$ listed in Lemma~\ref{lemma:properties_J}.
	Hence, if $L_s$ is psd (of corank $1$) so is $\Ldagger_s $. 
\end{proof}

\subsection{Undirected signed network case}
Assume that the graph $\G(A) = (\nodeSet,\edgeSet,A) $ is undirected, connected and without self-loops, which means that the adjacency matrix $A$ is irreducible, symmetric and with null-diagonal.
Consequently, the Laplacian $ L= \Sigma -A$ is a symmetric matrix, meaning that $\ker{L^T}= \ker{L}=\vspan{\1}$ and that $\range{L} = \vspan{\1}^\perp$.
In this case, Theorem~\ref{thm:directed} proves the equivalence between eventual exponential positivity of $-L$ and positive semidefiniteness of $L$ (of corank $1$).

For undirected networks the conditions discussed in Theorem~\ref{thm:directed} can be applied to the Kron reduction of $\G(A)$, as described in Section~\ref{sec:Kron_reduction}, showing that if $-L$ is eventually exponentially positive then $-L_r$ (where $L_r$ is the Kron reduced matrix) is also eventually exponentially positive.
\begin{theorem}\label{thm:undirected_kron}
	Let $\G(A)$ be an undirected, connected, signed network with Laplacian $L$. 
	Let $\alpha$ (with $\card{\alpha}\in [2,n-1]$) and $\beta=\{1,\dots,n\} \setminus \alpha$ be a partition of the node set $\nodeSet$.
	Let $\G_r$ be the signed undirected graph obtained by applying the Kron reduction on $\G$, and let $L_r=L/L[\beta]$ be its (symmetric) Laplacian.
	Consider the following conditions: 
	\begin{enumerate}[label=(\roman*)]
	    \item $-L$ is eventually exponentially positive;
		\item $-L_r$ is psd of corank $1$;
		\item $-L_r$ is eventually exponentially positive.
	\end{enumerate}
	If $L$ satisfies (i) then $L_r$ satisfies (ii) and (iii). 
	
    Furthermore, if $\alpha$ is the set of nodes incident to negatively weighted edges and $\beta = \{1,\dots,n\}\setminus \alpha$, then the conditions (i), (ii), (iii) are equivalent.
\end{theorem}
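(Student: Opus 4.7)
The plan is to reduce the theorem to Theorem~\ref{thm:directed} (specialized to the symmetric/normal case, where $-L$ eventually exponentially positive is equivalent to $L$ being psd of corank 1) combined with standard Schur complement properties, and for the final equivalence to invoke the result of \cite{Chen2016a} cited at the end of Section~\ref{sec:Kron_reduction}.

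First, since $\G(A)$ is undirected, $L$ is symmetric and hence normal and weight balanced. Applying Theorem~\ref{thm:directed}(iv) (with $L_s=L$) gives that condition (i) is equivalent to $L$ being psd of corank $1$. The same applies to the Kron-reduced $L_r$, which is symmetric by construction: (ii) and (iii) are equivalent via Theorem~\ref{thm:directed}. So it suffices to transport the psd-of-corank-$1$ property between $L$ and $L_r$.

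Second, to prove (i) $\Rightarrow$ (ii), start from $L$ psd of corank $1$. The principal submatrix $L[\beta]$ is then psd; since the Kron reduction requires $L[\beta]$ nonsingular, it is in fact pd. The congruence
\begin{equation*}
\begin{bmatrix} I & -L[\alpha,\beta] L[\beta]^{-1} \\ 0 & I \end{bmatrix} L \begin{bmatrix} I & 0 \\ -L[\beta]^{-1} L[\beta,\alpha] & I \end{bmatrix} = \begin{bmatrix} L_r & 0 \\ 0 & L[\beta] \end{bmatrix}
\end{equation*}
preserves inertia, so $L_r$ is psd and $\rank{L}=\rank{L[\beta]}+\rank{L_r}$. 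From $\rank{L}=n-1$ and $\rank{L[\beta]}=\card{\beta}$, we obtain $\rank{L_r}=\card{\alpha}-1$, i.e., $L_r$ is psd of corank~$1$, which is (ii). Then (ii) $\Rightarrow$ (iii) by applying Theorem~\ref{thm:directed} to the symmetric $L_r$.

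Finally, to establish the equivalence of (i), (ii), (iii) when $\alpha$ is chosen as the set of nodes incident to negatively weighted edges, the result of \cite{Chen2016a} recalled at the end of Section~\ref{sec:Kron_reduction} guarantees that, for this particular partition, $L[\beta]$ is pd and $L$ is psd of corank $1$ if and only if $L_r$ is psd of corank $1$. Combined with the equivalence between (i) and $L$ psd of corank $1$, this closes the loop (ii) $\Rightarrow$ (i), and analogously (iii) $\Rightarrow$ (ii) $\Rightarrow$ (i). The main subtlety (rather than obstacle) is just to ensure $L[\beta]$ is nonsingular so that the Schur complement is well defined; for arbitrary $\alpha$ this is part of the standing hypothesis that $L_r$ exists, while for the special choice of $\alpha$ it is automatic via \cite{Chen2016a}.
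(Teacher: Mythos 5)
Your proof is correct and follows essentially the same route as the paper's: translate (i) into ``$L$ psd of corank $1$'' for the symmetric (hence normal, weight balanced) $L$, push definiteness and the corank count through the block factorization of $L$ via Sylvester's law of inertia to get (ii), convert back to (iii) for the symmetric $L_r$, and invoke the known equivalence for the special choice of $\alpha$. The one place where you shortcut the paper is the positive definiteness of $L[\beta]$: you get it from ``psd plus nonsingular'', treating nonsingularity as implicit in the hypothesis that $L_r=L/L[\beta]$ exists, whereas the paper derives it from the assumptions — using irreducibility of $L$ (connectedness of $\G$), the row and column inclusion property of psd matrices, and the corank-$1$ condition, with a contradiction argument split into the cases $\card{\beta}=1$ and $\card{\beta}>1$. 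That is the only substantive lemma-level work in the paper's proof, and it buys a slightly stronger conclusion: under (i) the Kron reduction is automatically well defined for every partition, not well defined by assumption. Two minor citation points: the equivalence ``$-L$ eventually exponentially positive $\Leftrightarrow$ $L$ psd of corank $1$'' for symmetric $L$ is Theorem~\ref{thm:Altafini2019}(iii) (with $L_s=L$), not Theorem~\ref{thm:directed}(iv), whose item (iv) concerns $\Ldagger_s$ rather than $L_s$; and for the final equivalence the paper's proof appeals to \cite[Theorem~1]{Chen2020}, although the statement you quote from \cite{Chen2016a}, as recalled in Section~\ref{sec:Kron_reduction}, serves the same purpose.
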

\begin{proof}
    Let $\alpha$ (with $\card{\alpha}\in [2,n-1]$) and $\beta=\{1,\dots,n\} \setminus \alpha$ be a partition of the node set $\nodeSet$ meaning that, after an adequate permutation, $L$ can be rewritten as $L = \begin{bmatrix} L[\alpha] & L[\alpha,\beta] \\L[\beta,\alpha] & L[\beta]\end{bmatrix}$. 
    Let $L_r = L/L[\beta] = L[\alpha]- L[\alpha,\beta] L[\beta]^{-1} L[\beta,\alpha]\in \R^{\card{\alpha}\times \card{\alpha}}$ be the Kron reduced matrix.
    Observe that $L_r$ is symmetric and that $\1_{\card{\alpha}} \in \ker{L_r}$ (see also \cite[Lemma II.1]{DorflerBullo2013}), meaning that $L_r$ is itself a signed Laplacian.
    
    (i)$\Longrightarrow$(ii)$\Longleftrightarrow$(iii).
    Assume that $-L$ is eventually exponentially positive or, equivalently, that $L$ is psd of corank $1$ (see Theorem~\ref{thm:Altafini2019}).
    Then $L[\beta]$ is also psd as it is a principal submatrix of $L$. 
    In what follows we prove (by contradiction) that $L[\beta]$ is actually pd, since $L$ is irreducible and has the row and column inclusion property. 
    
    Let $\card{\beta}=1$ and assume, by contradiction, that $L[\beta]=\Sigma[\beta]=0$. However, $L$ psd means that $L$ has the row and column inclusion property, i.e., if the diagonal element $\Sigma[\beta]$ is zero then $A[\alpha,\beta]=0$ and $A[\beta,\alpha]=0$, which contradicts the hypothesis that $L$ (and $A$) is irreducible. Hence, $L[\beta]>0$ (pd).
    Now we repeat the same argument for $1<\card{\beta}\le n-2$: suppose by contradiction that $L[\beta]$ is psd, i.e., there exists a vector $v\in \R^{\card{\beta}}$ s.t. $L[\beta]v=0$. Then $\bar v= \begin{bmatrix} 0\\v\end{bmatrix}$ is s.t. $L \bar v=0$ (since $\bar v^T L \bar v=0$), which contradicts the hypothesis that $L$ has corank $1$ since $\1 \in \ker{L}$ and $\bar v\notin \vspan{\1}$ (notice that if $v=\1_{\card{\beta}}$, then either $A[\beta,\alpha]$ is the zero matrix - in contradiction with the hypothesis that $L$ is irreducible -, or $\begin{bmatrix} \1_{\card{\alpha}}\\ 0\end{bmatrix},\begin{bmatrix} 0\\ \1_{\card{\beta}}\end{bmatrix}\in \ker{L}$ - in contradiction with the hypothesis that $L$ has corank $1$).
    Therefore, $L[\beta]$ is pd. 
    
    Rewrite $L$ as
    \begin{equation*}
    L = \begin{bmatrix} I & L[\alpha,\beta] L[\beta]^{-1} \\ 0 & I\end{bmatrix} 
    \begin{bmatrix} L_r & 0 \\ 0 & L[\beta]\end{bmatrix}
    \begin{bmatrix} I & 0 \\ L[\beta]^{-1} L[\beta,\alpha] & I\end{bmatrix} 
    \end{equation*}
    where $L[\alpha,\beta] L[\beta]^{-1}=(L[\beta]^{-1} L[\beta,\alpha])^T$.
    Then, applying Sylverster's law of inertia, $L$ psd of corank $1$ and $L[\beta]$ pd imply $L_r$ psd of corank $1$ or, equivalently (from Theorem~\ref{thm:Altafini2019}), $-L_r$ eventually exponentially positive.
    
    (i)$\Longleftrightarrow$(ii)$\Longleftrightarrow$(iii). Let $\alpha$ be the set of nodes incident to negatively weighted edges.
	In what follows, the steps marked by the symbol $\star$ follow from Theorem~\ref{thm:Altafini2019} while the step marked by the symbol $\triangle$ from \cite[Theorem~1]{Chen2020}. 
	\begin{align*}
		&-L \text{ eventually exponentially positive}
		\\ &\overset{\star}{\Leftrightarrow}
		L \text{ psd of corank $1$}
		\\ &
		\overset{\triangle}{\Leftrightarrow}
		L_r \text{ psd of corank $1$}
		\\ &\overset{\star}{\Leftrightarrow}
		-L_r \text{ eventually exponentially positive}.
	\end{align*}
\end{proof}

Similarly to Corollary~\ref{cor:class_exppos_directed}, from Theorems~\ref{thm:directed} and \ref{thm:undirected_kron} we obtain the following characterization of the class of eventually exponentially positive Laplacian matrices of undirected graphs.
\begin{corollary}
The class of eventually exponentially positive, irreducible, symmetric Laplacian matrices is closed under the pseudoinverse operation and the operation of Kron reduction.
\end{corollary}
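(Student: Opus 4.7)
The plan is to deduce this corollary almost entirely from Theorem~\ref{thm:directed} and Theorem~\ref{thm:undirected_kron}, which already carry the eventual exponential positivity through pseudoinversion and through Kron reduction respectively. What remains is to verify that the two structural attributes in the statement, namely symmetry and irreducibility, are also preserved. Since a symmetric matrix is in particular normal and therefore weight balanced, the hypotheses of both theorems are automatically met, so the whole argument is a short assembly.

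For closure under pseudoinversion, I would first invoke the equivalence (i)$\Longleftrightarrow$(iii) of Theorem~\ref{thm:directed} on the symmetric (hence normal, hence weight balanced) Laplacian $L$, obtaining that $-\Ldagger$ is eventually exponentially positive. Symmetry of $\Ldagger$ is then a standard uniqueness argument: when $L=L^T$, the matrix $(\Ldagger)^T$ satisfies all four Moore--Penrose defining identities, so by uniqueness $(\Ldagger)^T=\Ldagger$. Finally, irreducibility of $\Ldagger$ comes for free from Lemma~\ref{lemma:irreducible}, applied to $\Ldagger$ itself, since $-\Ldagger$ is eventually exponentially positive.

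For closure under Kron reduction, the implication (i)$\Longrightarrow$(iii) of Theorem~\ref{thm:undirected_kron} already yields that $-L_r$ is eventually exponentially positive for every admissible partition $(\alpha,\beta)$ of the node set. Symmetry of $L_r$ is immediate from the Schur-complement formula $L_r=L[\alpha]-L[\alpha,\beta]L[\beta]^{-1}L[\beta,\alpha]$, together with $L[\alpha,\beta]^T=L[\beta,\alpha]$ and $L[\beta]^T=L[\beta]$. Irreducibility of $L_r$ is again a direct application of Lemma~\ref{lemma:irreducible} to the (symmetric, hence weight balanced) signed Laplacian $L_r$. Since there is no new computation beyond these citations, there is no real obstacle: the only point worth pausing on is ensuring that each of the three properties travels through both operations simultaneously, which the discussion above does.
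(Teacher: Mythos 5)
Your proposal is correct and follows essentially the same route as the paper, which states this corollary as an immediate consequence of Theorem~\ref{thm:directed} and Theorem~\ref{thm:undirected_kron} without further proof. Your added checks---symmetry of $\Ldagger$ via Moore--Penrose uniqueness, symmetry of $L_r$ from the Schur complement formula, and irreducibility of both via Lemma~\ref{lemma:irreducible}---are exactly the steps the paper leaves implicit, so there is nothing to correct.
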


\section{ELECTRICAL NETWORKS AND EFFECTIVE RESISTANCE}
A resistive electrical network can be represented as a graph $\G(A)=(\nodeSet,\edgeSet,A)$ where each weight $a_{ij}$ represents the inverse of the resistance between the two nodes (i.e., the conductance of the transmission): $a_{ij}=\frac{1}{r_{ij}}$, see \cite{KleinRandic1993,GhoshBoydSaberi2008} and \cite{DorflerPorcoBullo2018} for an overview. 
The notion of effective resistance between a pair of nodes (see e.g. \cite{DorflerPorcoBullo2018}) is related to the pseudoinverse of the Laplacian associated to the electrical network. When the network is connected, undirected and nonnegative, its Laplacian (and its pseudoinverse) is known to be psd of corank $1$, which means that the effective resistance between two nodes is well-defined (see e.g. \cite{GhoshBoydSaberi2008} for its properties). 
Extensions to signed graphs and negative resistances have been investigated in \cite{ZelazoBurger2014,Chen2016a,ZelazoBurger2017,ChenAl2016,Chen2020}, where positive semidefiniteness of the Laplacian is expressed in terms of effective resistance.

In what follows we make use of $\Ldagger_s$ to extend the notion of effective resistance to directed (strongly connected) signed networks whose Laplacian $L$ is a normal matrix and $-L$ is eventually exponentially positive.
\begin{definition}
	The effective resistance between two nodes $i,j\in \{1,\dots,n\}$ of a signed digraph whose corresponding Laplacian $L$ is normal and $-L$ is eventually exponentially positive, is given by
	\begin{equation}
		R_{ij} 	= (e_i-e_j)^T \Ldagger_s (e_i-e_j),
		\label{eqn:Rij}
	\end{equation}
	where $\Ldagger_s = \frac{\Ldagger+(\Ldagger)^T}{2}$ and $\Ldagger$ is the pseudoinverse of $L$.
	The effective resistance matrix $R=[R_{ij}]$ is defined as
	\begin{equation}
		R= D_{\Ldagger_s} \1 \1^T + \1 \1^T D_{\Ldagger_s} - 2 \Ldagger_s
		\label{eqn:R}
	\end{equation}
	where $D_{\Ldagger_s} = \diag{[\Ldagger_s]_{11} , \dots, [\Ldagger_s]_{nn}}$ is a diagonal matrix whose elements are the diagonal elements of $\Ldagger_s$.
	The total effective resistance is defined as
	\begin{equation}
		\Rtot = \frac{1}{2}\1^T R \1.
		\label{eqn:Rtot}
	\end{equation}
\end{definition}
In the literature on undirected networks, the total effective resistance \eqref{eqn:Rtot} is also called "weighted effective graph resistance" \cite{Ellens2011} or "Kirchhoff index" \cite{XiaoGutman2003}, and represents the overall transport capability of the graph \cite{VanMieghem2017}.
\begin{remark}
	If the graph is undirected, eq.~\eqref{eqn:Rij} reduces to the standard notion of effective resistance since $\Ldagger_s = \Ldagger$.	
\end{remark}
The effective resistance \eqref{eqn:Rij}, as its counterpart for undirected graphs (see \cite{KleinRandic1993,GhoshBoydSaberi2008,DorflerPorcoBullo2018}), is still nonnegative and symmetric, its square root is a metric, and the effective resistance matrix \eqref{eqn:R} is a Euclidean distance matrix, i.e., it has nonnegative elements, zero diagonal elements and it is negative semidefinite on $\1^\perp$ \cite{GhoshBoydSaberi2008}.
The last part of the proof of the next lemma follows \cite[Section 2.8]{GhoshBoydSaberi2008} and is here reported for completeness.
\begin{lemma}\label{lemma:Rmetric}
    The square root of the effective resistance \eqref{eqn:Rij} between two nodes $i,j\in \{1,\dots,n\}$ of a signed digraph with normal Laplacian $L$ is a metric: it is nonnegative, symmetric and it satisfies the triangle inequality.
    The effective resistance matrix \eqref{eqn:R} is a Euclidean distance matrix.
\end{lemma}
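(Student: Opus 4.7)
The plan is to exploit the fact that, under the hypothesis that $L$ is normal with $-L$ eventually exponentially positive, Theorem~\ref{thm:directed}(iv) guarantees that $\Ldagger_s$ is positive semidefinite of corank $1$ with $\ker{\Ldagger_s}=\vspan{\1}$. This is the single structural fact from which every claim follows.

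First, I would fix a factorization $\Ldagger_s = M^T M$ (for instance via the symmetric square root of $\Ldagger_s$) chosen so that $\ker{M} = \ker{\Ldagger_s} = \vspan{\1}$. Then for any $i,j \in \{1,\dots,n\}$,
\begin{equation*}
R_{ij} = (e_i-e_j)^T M^T M (e_i-e_j) = \norm{M(e_i-e_j)}^2,
\end{equation*}
so that $\sqrt{R_{ij}} = \norm{M(e_i-e_j)}_2$. Nonnegativity and symmetry are then immediate from the corresponding properties of the Euclidean norm, and the triangle inequality follows from $M(e_i - e_k) = M(e_i-e_j) + M(e_j - e_k)$ together with the triangle inequality for the $2$-norm. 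If one also wants the identity of indiscernibles, it comes from $\ker{M}=\vspan{\1}$: $M(e_i-e_j)=0$ forces $e_i - e_j \in \vspan{\1}$, which for indicator vectors means $i=j$.

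To conclude that $R$ is a Euclidean distance matrix, I would verify the three standard conditions. Nonnegativity of the entries is already in hand, and $R_{ii}=\norm{M(e_i-e_i)}^2=0$ gives a zero diagonal. For the negative semidefiniteness of $R$ on $\1^\perp$, I would take any $x$ with $\1^T x = 0$ and apply it to \eqref{eqn:R}: the two rank-one terms $D_{\Ldagger_s}\1\1^T$ and $\1\1^T D_{\Ldagger_s}$ are annihilated on both sides, leaving $x^T R x = -2\, x^T \Ldagger_s x \le 0$ by positive semidefiniteness of $\Ldagger_s$. Equivalently, the columns $p_i := M e_i \in \R^n$ give an explicit embedding realizing $R_{ij}=\norm{p_i - p_j}^2$, which directly exhibits $R$ as a Euclidean distance matrix.

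There is no genuine obstacle once Theorem~\ref{thm:directed}(iv) is available — the argument reduces to standard facts about factorizations of psd matrices and the Schoenberg-type characterization of Euclidean distance matrices. The only delicate point is that the normality assumption on $L$ is essential: without it, $\Ldagger_s$ may fail to be psd (cf. Example~\ref{example:Ls_nonpsd_cont}), in which case both the metric property and the Euclidean distance matrix property would collapse.
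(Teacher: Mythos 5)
Your proposal is correct and follows essentially the same route as the paper's proof: both extract positive semidefiniteness of corank $1$ (with $\ker{\Ldagger_s}=\vspan{\1}$) from Theorem~\ref{thm:directed}, write $R_{ij}=\norm{(\Ldagger_s)^{\half}(e_i-e_j)}_2^2$ (your $M$ is just the psd square root) to get nonnegativity, symmetry, the identity of indiscernibles and the triangle inequality via the Euclidean norm, and then verify the Euclidean distance matrix property by noting that for $x\perp\1$ the rank-one terms in \eqref{eqn:R} vanish, leaving $x^TRx=-2\,x^T\Ldagger_s x\le 0$. No substantive differences to report.
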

\begin{proof}
    Theorem~\ref{thm:directed} shows that for a signed digraph with normal Laplacian $L$ s.t. $-L$ is eventually exponentially positive, the matrix $\Ldagger_s$ is itself a signed Laplacian and it is psd of corank $1$ with $\ker{\Ldagger_s}=\vspan{\1}$.
    Since $R_{ij}$ is a quadratic form generated by $\Ldagger_s$, then 
    \begin{align*}
        R_{ij}&=(e_i-e_j)^T \Ldagger_s (e_i-e_j)
        = \norm{ (\Ldagger_s)^\half (e_i-e_j)}_2^2
        \\&= \norm{ (\Ldagger_s)^\half (e_j-e_i)}_2^2
        =(e_j-e_i)^T \Ldagger_s (e_j-e_i)
        = R_{ji}
    \end{align*}
    for all $i,j=1,\dots,n$, and
    \begin{align*}
        R_{ij}&=(e_i-e_j)^T \Ldagger_s (e_i-e_j)
        = \norm{ (\Ldagger_s)^\half (e_i-e_j)}_2^2\ge 0
    \end{align*}
    for all $i,j=1,\dots,n$, with $R_{ij}=0$ if and only if $i=j$ (since $e_i-e_j \in \vspan{\1^\perp}$ when $i\ne j$).
    Moreover, 
    \begin{align*}
        \sqrt{R_{ik}} + \sqrt{R_{kj}}
        &= \norm{ (\Ldagger_s)^\half (e_i-e_k)}_2 +\norm{ (\Ldagger_s)^\half(e_k-e_j)}_2
        \\ 
        &\ge \norm{ (\Ldagger_s)^\half (e_i-e_k)+(\Ldagger_s)^\half(e_k-e_j)}_2
        \\ 
        &= \norm{ (\Ldagger_s)^\half (e_i-e_j)}_2 = \sqrt{R_{ij}}
    \end{align*} 
    for all $i,j,k=1,\dots,n$, i.e., the triangle inequality holds.
    
    Finally, to prove that $R$ is an Euclidean distance matrix we need to show that $x^T R x \le 0$ for all $x\perp \1$:
    \begin{equation*}
        x^T R x = x^T (D_{\Ldagger_s} \1 \1^T  + \1 \1^T D_{\Ldagger_s} - 2 \Ldagger_s) x
         = - 2 x^T \Ldagger_s x \le 0,
    \end{equation*}
    since $\Ldagger_s$ is psd with $\ker{\Ldagger_s}=\vspan{\1}$.
\end{proof}

Notice that if we consider only nonnegative digraphs then the normality assumption of the Laplacian can be replaced by the less restrictive weight balanced assumption when defining the effective resistance in \eqref{eqn:Rij}.
Indeed, Theorem~\ref{thm:directed_nonneg} shows that if the digraph is nonnegative and strongly connected then $\Ldagger_s$ is psd of corank $1$.

\begin{proposition}
	Consider a nonnegative strongly connected weight balanced digraph $\G(A)$ (with $A\ge0$). Then $R_{ij} \ge 0$ for all $i,j=1,\dots,n$, and $\Rtot\ge 0$.
\end{proposition}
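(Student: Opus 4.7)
The plan is to invoke Theorem~\ref{thm:directed_nonneg} to reduce everything to a statement about a positive semidefinite matrix. Since $\G(A)$ is nonnegative, strongly connected and weight balanced, Theorem~\ref{thm:directed_nonneg} applies and yields that $\Ldagger_s$ is psd of corank $1$ with $\ker{\Ldagger_s} = \vspan{\1}$. In particular, $\Ldagger_s$ is itself a (symmetric) signed Laplacian, so the quadratic form $x \mapsto x^T \Ldagger_s x$ is nonnegative on all of $\R^n$.

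From this, nonnegativity of $R_{ij}$ is immediate: for any $i,j \in \{1,\dots,n\}$,
\begin{equation*}
R_{ij} = (e_i - e_j)^T \Ldagger_s (e_i - e_j) \ge 0,
\end{equation*}
since $\Ldagger_s$ is psd. (As a side remark, $R_{ij}=0$ iff $i=j$, using that $e_i-e_j \in \vspan{\1}^\perp$ for $i\ne j$ together with $\ker{\Ldagger_s}=\vspan{\1}$, but this sharper statement is not required for the proposition.)

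For the total effective resistance, the cleanest route is to compute $\1^T R \1$ directly from the formula \eqref{eqn:R}. Using $\1^T \1 = n$ and the identity $\Ldagger_s \1 = 0$ (which follows from $\ker{\Ldagger_s}=\vspan{\1}$), one gets
\begin{equation*}
\1^T R \1 = \1^T D_{\Ldagger_s} \1 \cdot n + n \cdot \1^T D_{\Ldagger_s} \1 - 2 \1^T \Ldagger_s \1 = 2 n \operatorname{tr}(\Ldagger_s),
\end{equation*}
so $\Rtot = n \operatorname{tr}(\Ldagger_s) \ge 0$, because the trace of a psd matrix is nonnegative. Alternatively, one can simply observe $\Rtot = \tfrac{1}{2}\sum_{i,j} R_{ij}$ and conclude from the first part. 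There is really no obstacle here: once Theorem~\ref{thm:directed_nonneg} is in hand, both claims are one-line consequences of $\Ldagger_s$ being psd with kernel $\vspan{\1}$; the only thing to verify is that the hypotheses (nonnegative, strongly connected, weight balanced) of Theorem~\ref{thm:directed_nonneg} are exactly those of the proposition.
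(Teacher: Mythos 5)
Your proposal is correct and follows exactly the route the paper intends: the proposition is stated as an immediate consequence of Theorem~\ref{thm:directed_nonneg}, which gives $\Ldagger_s$ psd of corank $1$, after which $R_{ij}=(e_i-e_j)^T \Ldagger_s (e_i-e_j)\ge 0$ and $\Rtot = n\,\mathrm{trace}(\Ldagger_s)\ge 0$ are one-line consequences (the trace identity also matches the computation the paper uses later when comparing $\Rtot$ with $K_f$). No gaps; your verification that the hypotheses coincide with those of Theorem~\ref{thm:directed_nonneg} is the only substantive point, and you handled it.
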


Another generalization of the notion of effective resistance for directed, strongly connected, nonnegative networks is introduced in \cite{YoungScardoviLeonard2010,YoungScardoviLeonard2016a}.
The authors use the fact that the Laplacian $L$ is marginally stable and its projection on $\1^\perp$, denoted $\bar L= QLQ^T$ (where the rows of $Q\in \R^{n-1\times n}$ form an orthonormal basis for $\1^\perp$), is Hurwitz stable, to define the effective resistance between nodes $i$ and $j$ as $\tilde{R}_{ij}=(e_i-e_j)^T X (e_i-e_j)$, where $X= 2 Q^T S Q$ and $S$ is the pd solution of the Lyapunov equation $\bar L S + S \bar L^T=I_{n-1}$. 
The Kirchhoff index is then defined as $K_f=\sum_{i<j} \tilde{R}_{ij}$.

If we consider digraphs $\G(A)$ whose Laplacian is a normal matrix, $K_f$ reduces to $K_f = n \sum_{i=2}^n \frac{1}{\real{\lambda_i(L)}}$ and we can show that it provides an upper bound for $\Rtot$ defined in \eqref{eqn:Rtot}.
\begin{align*}
&\Rtot 
= n \cdot \text{trace}(\Ldagger_s) 
= n \cdot \sum_{i=2}^n \lambda_i(\Ldagger_s)    
\\&\;= n \cdot \sum_{i=2}^n \real{\lambda_i(\Ldagger)}
= n \cdot \sum_{i=2}^n \real{\frac{1}{\lambda_i(L)}}
\\&\;= n \cdot \sum_{i=2}^n \frac{\real{\lambda_i(L)}}{\abs{\lambda_i(L)}^2}
= n \cdot \sum_{i=2}^n \frac{1}{\real{\lambda_i(L)}\bigl(1+\frac{\imag{\lambda_i(L)}^2}{\real{\lambda_i(L)}^2}\bigr)}
\\&\;\le n \cdot \sum_{i=2}^n \frac{1}{\real{\lambda_i(L)}}
= K_f,
\end{align*}
with equality only if $\G(A)$ is undirected (notice that $L$ normal and non-symmetric means $n\ge 3$).

\begin{example}
Let $\G(A)$ be a nonnegative, unweighted, directed, cycle graph, whose Laplacian $L$ is a normal matrix with eigenvalues $1\!+\!e^{i \theta_k}$, with $\theta_k=\pi \bigl(1-\frac{2k}{n}\bigr)$, for all $k=0,\ldots,n-1$.
Then, $K_f = \frac{n(n^2-1)}{6}$ (see e.g. \cite{YoungScardoviLeonard2010}), $\Rtot 
= n \cdotp \sum_{k=2}^n \real{\frac{1}{\lambda_k(L)}}
= n \cdotp \sum_{k=2}^n \frac{1+\cos \theta_k}{(1+\cos \theta_k)^2+\sin^2\theta_k}
= n \cdotp \sum_{k=2}^n \frac{1}{2}= \frac{n (n-1)}{2}$, and we obtain $\Rtot \le \! K_f$ for all $n\ge 2$.
\end{example}

\section{CONCLUSIONS AND FUTURE WORK}
For signed Laplacians which are weight balanced, marginal stability (of corank $1$) is equivalent to eventual exponential stability. 
This work shows that the class of eventually exponentially positive, weight balanced Laplacians is closed under the pseudoinverse operation and, therefore, it provides a natural embedding for the usual nonnegative Laplacian. As a byproduct we get conditions for checking the marginal stability of the pseudoinverse of signed Laplacians.
Moreover, closure under the symmetrization operation can be proven when this class is restricted to Laplacians that are also normal matrices. The normality assumption is a sufficient condition and it remains to be investigated if it can be relaxed.

In addition, we would like to gain a better understanding of the set of eventually exponentially positive, weight balanced Laplacians and its properties. For instance, it is easy to observe that it is not a convex cone, not even if we consider normal matrices (but the intuition is that this set is actually a convex cone, without the origin, if we restrict to undirected graphs). However, similarly to \cite{JohnsonTarazaga2004}, it is possible to show that it is path-wise connected. These considerations, among other directions, will be investigated in a future paper.


\begin{thebibliography}{10}
	\providecommand{\url}[1]{#1}
	\csname url@samestyle\endcsname
	\providecommand{\newblock}{\relax}
	\providecommand{\bibinfo}[2]{#2}
	\providecommand{\BIBentrySTDinterwordspacing}{\spaceskip=0pt\relax}
	\providecommand{\BIBentryALTinterwordstretchfactor}{4}
	\providecommand{\BIBentryALTinterwordspacing}{\spaceskip=\fontdimen2\font plus
		\BIBentryALTinterwordstretchfactor\fontdimen3\font minus
		\fontdimen4\font\relax}
	\providecommand{\BIBforeignlanguage}[2]{{%
			\expandafter\ifx\csname l@#1\endcsname\relax
			\typeout{** WARNING: IEEEtran.bst: No hyphenation pattern has been}%
			\typeout{** loaded for the language `#1'. Using the pattern for}%
			\typeout{** the default language instead.}%
			\else
			\language=\csname l@#1\endcsname
			\fi
			#2}}
	\providecommand{\BIBdecl}{\relax}
	\BIBdecl
	
	\bibitem{Chung1997}
	F.~R.~K. Chung, \emph{{Spectral Graph Theory}}, ser. CBMS Number 92.\hskip 1em
	plus 0.5em minus 0.4em\relax American Mathematical Society, 1997.
	
	\bibitem{Agaev2005Spectra}
	R.~Agaev and P.~Chebotarev, ``{On the spectra of nonsymmetric Laplacian
		matrices},'' \emph{Lin. Algebra Appl.}, vol. 399, pp. 157--168, 2005.
	
	\bibitem{OlfatiSaberMurray2004}
	R.~Olfati-Saber and R.~Murray, ``{Consensus Problems in Networks of Agents With
		Switching Topology and Time-Delays},'' \emph{IEEE TAC}, vol.~49, no.~9, pp.
	1520--1533, sep 2004.
	
	\bibitem{AltafiniLini2015}
	C.~Altafini and G.~Lini, ``{Predictable dynamics of opinion forming for
		networks with antagonistic interactions},'' \emph{IEEE TAC}, vol.~60, no.~2,
	pp. 342--357, feb 2015.
	
	\bibitem{BronskiDeville2014}
	J.~C. Bronski and L.~DeVille, ``{Spectral Theory for Dynamics on Graphs
		Containing Attractive and Repulsive Interactions},'' \emph{SIAM J. Appl.
		Math.}, vol.~74, no.~1, pp. 83--105, jan 2014.
	
	\bibitem{PanShaoMesbahi2016}
	L.~Pan, H.~Shao, and M.~Mesbahi, ``{Laplacian dynamics on signed networks},''
	in \emph{55th IEEE CDC}, Las Vegas, USA, dec 2016, pp. 891--896.
	
	\bibitem{KleinRandic1993}
	D.~J. Klein and M.~Randi{\'{c}}, ``{Resistance distance},'' \emph{Journal of
		Mathematical Chemistry}, vol.~12, no.~1, pp. 81--95, 1993.
	
	\bibitem{XiaoGutman2003}
	W.~Xiao and I.~Gutman, ``{Resistance distance and Laplacian spectrum},''
	\emph{Theoretical Chemistry Accounts}, vol. 110, no.~4, pp. 284--289, 2003.
	
	\bibitem{GhoshBoydSaberi2008}
	A.~Ghosh, S.~Boyd, and A.~Saberi, ``{Minimizing Effective Resistance of a
		Graph},'' \emph{SIAM Review}, vol.~50, no.~1, pp. 37--66, jan 2008.
	
	\bibitem{Chandra1996}
	A.~K. Chandra \emph{et~al.}, ``{The electrical resistance of a graph captures
		its commute and cover times},'' \emph{Computational Complexity}, vol.~6,
	no.~4, pp. 312--340, 1996.
	
	\bibitem{Palacios2001}
	J.~L. Palacios, ``{Resistance distance in graphs and random walks},''
	\emph{Int. J. Quantum Chem.}, vol.~81, no.~1, pp. 29--33, 2001.
	
	\bibitem{Boley2011}
	D.~Boley, G.~Ranjan, and Z.~L. Zhang, ``{Commute times for a directed graph
		using an asymmetric Laplacian},'' \emph{Lin. Algebra Appl.}, vol. 435, no.~2,
	pp. 224--242, 2011.
	
	\bibitem{VanMieghem2017}
	P.~{Van Mieghem}, K.~Devriendt, and H.~Cetinay, ``{Pseudoinverse of the
		Laplacian and best spreader node in a network},'' \emph{Physical Review E},
	vol.~96, no.~3, pp. 1--22, 2017.
	
	\bibitem{YoungScardoviLeonard2010}
	G.~F. Young, L.~Scardovi, and N.~E. Leonard, ``{Robustness of noisy consensus
		dynamics with directed communication},'' in \emph{2010 ACC}, Baltimore, MD,
	USA, jun 2010, pp. 6312--6317.
	
	\bibitem{YoungScardoviLeonard2011}
	------, ``{Rearranging trees for robust consensus},'' in \emph{50th IEEE CDC
		and ECC}, Orlando, FL, USA, 2011, pp. 1000--1005.
	
	\bibitem{lindmark2020investigating}
	G.~Lindmark and C.~Altafini, ``{Investigating the effect of edge modifications
		on networked control systems},'' \emph{arXiv:2007.13713}, 2020.
	
	\bibitem{Noutsos2006}
	D.~Noutsos, ``{On Perron–Frobenius property of matrices having some negative
		entries},'' \emph{Lin. Algebra Appl.}, vol. 412, no. 2-3, pp. 132--153, 2006.
	
	\bibitem{NoutsosTsatsomeros2008}
	D.~Noutsos and M.~J. Tsatsomeros, ``{Reachability and Holdability of
		Nonnegative States},'' \emph{SIAM J. Matrix Analysis Appl.}, vol.~30, no.~2,
	pp. 700--712, jan 2008.
	
	\bibitem{JohnsonTarazaga2004}
	C.~R. Johnson and P.~Tarazaga, ``{On matrices with Perron-Frobenius properties
		and some negative entries},'' \emph{Positivity}, vol.~8, no.~4, pp. 327--338,
	2004.
	
	\bibitem{Shi2019Dynamics}
	G.~Shi, C.~Altafini, and J.~S. Baras, ``{Dynamics over Signed Networks},''
	\emph{SIAM Review}, vol.~61, no.~2, pp. 229--257, jan 2019.
	
	\bibitem{YoungScardoviLeonard2016a}
	G.~F. Young, L.~Scardovi, and N.~E. Leonard, ``{A New Notion of Effective
		Resistance for Directed Graphs—Part I: Definition and Properties},''
	\emph{IEEE TAC}, vol.~61, no.~7, pp. 1727--1736, jul 2016.
	
	\bibitem{Meyer2000}
	C.~D. Meyer, \emph{{Matrix Analysis and Applied Linear Algebra}}.\hskip 1em
	plus 0.5em minus 0.4em\relax Society for Industrial {\&} Applied Mathematics,
	2000.
	
	\bibitem{HornJohnson2013}
	R.~A. Horn and C.~R. Johnson, \emph{{Matrix analysis}}, 2nd~ed.\hskip 1em plus
	0.5em minus 0.4em\relax Cambridge University Press, 2013.
	
	\bibitem{DorflerBullo2013}
	F.~D{\"{o}}rfler and F.~Bullo, ``{Kron reduction of graphs with applications to
		electrical networks},'' \emph{IEEE Trans. Circuits Syst. I: Regular Papers},
	vol.~60, no.~1, pp. 150--163, 2013.
	
	\bibitem{DorflerPorcoBullo2018}
	F.~Dorfler, J.~W. Simpson-Porco, and F.~Bullo, ``{Electrical Networks and
		Algebraic Graph Theory: Models, Properties, and Applications},''
	\emph{Proceedings of the IEEE}, vol. 106, no.~5, pp. 977--1005, 2018.
	
	\bibitem{Chen2016a}
	Y.~Chen, S.~Z. Khong, and T.~T. Georgiou, ``{On the definiteness of graph
		Laplacians with negative weights: Geometrical and passivity-based
		approaches},'' in \emph{2016 ACC}, Boston, MA, USA, jul 2016, pp. 2488--2493.
	
	\bibitem{Altafini2019Investigating}
	C.~Altafini, ``{Investigating stability of Laplacians on signed digraphs via
		eventual positivity},'' in \emph{58th IEEE CDC}, Nice, France, dec 2019, pp.
	5044--5049.
	
	\bibitem{LewisNewman1968}
	T.~O. Lewis and T.~G. Newman, ``{Pseudoinverses of Positive Semidefinite
		Matrices},'' \emph{SIAM J. Appl. Math.}, vol.~16, no.~4, pp. 701--703, 1968.
	
	\bibitem{Bullo2020Lectures}
	F.~Bullo, \emph{{Lectures on Nonlinear Network Systems (ed. 1.4)}}.\hskip 1em
	plus 0.5em minus 0.4em\relax Kindle Direct Publishing, 2020.
	\url{http://motion.me.ucsb.edu/book-lns}
	
	\bibitem{BenIsrealGreville2003}
	A.~Ben-Israel and T.~N.~E. Greville, \emph{{Generalized Inverses}}, 2nd~ed.,
	ser. CMS Books in Mathematics.\hskip 1em plus 0.5em minus 0.4em\relax New
	York: Springer-Verlag, 2003.
	
	\bibitem{Chen2020}
	W.~Chen \emph{et~al.}, ``{On Spectral Properties of Signed Laplacians with
		Connections to Eventual Positivity},'' \emph{IEEE TAC}, vol.~66, no.~5, pp.
	2177--2190, 2020.
	
	\bibitem{ZelazoBurger2014}
	D.~Zelazo and M.~Burger, ``{On the definiteness of the weighted Laplacian and
		its connection to effective resistance},'' in \emph{53rd IEEE CDC}, dec 2014,
	pp. 2895--2900.
	
	\bibitem{ZelazoBurger2017}
	------, ``{On the robustness of uncertain consensus networks},'' \emph{IEEE
		Trans. Control Netw. Syst.}, vol.~4, no.~2, pp. 170--178, 2017.
	
	\bibitem{ChenAl2016}
	W.~Chen \emph{et~al.}, ``{Characterizing the positive semidefiniteness of
		signed Laplacians via Effective Resistances},'' in \emph{55th IEEE CDC}, Las
	Vegas, USA, dec 2016, pp. 985--990.
	
	\bibitem{Ellens2011}
	W.~Ellens \emph{et~al.}, ``{Effective graph resistance},'' \emph{Lin. Algebra
		Appl.}, vol. 435, no.~10, pp. 2491--2506, 2011.
	
\end{thebibliography}

\end{document}